\documentclass[a4paper,twocolumn,accepted=2022-10-23]{quantumarticle}
\pdfoutput=1

\usepackage[utf8]{inputenc}
\usepackage[english]{babel}
\usepackage[T1]{fontenc}
\usepackage{amssymb,amsfonts,amsmath,amsthm}
\usepackage{graphicx}
\usepackage[numbers,sort&compress]{natbib} 	

\usepackage{color}
\usepackage{soul}

\usepackage[colorlinks=true]{hyperref}

\usepackage{ulem}

\newcommand{\bra}[1]{\langle {#1} \vert}

\newcommand{\ket}[1]{\vert {#1} \rangle}
\newcommand{\kket}[1]{\vert {#1} \rangle\!\rangle}

\newcommand{\proj}[1]{\vert {#1} \rangle\!\langle {#1} \vert}
\newcommand{\ketbra}[2]{\vert {#1} \rangle\!\langle {#2} \vert}

\newcommand{\Tr}[0]{{\mathrm{Tr}}}
\newcommand{\fcal}[0]{{\mathcal{F}}}

\newcommand{\ical}[0]{{\mathcal{I}}}
\newcommand{\ocal}[0]{{\mathcal{O}}}
\newcommand{\pcal}[0]{{\mathcal{P}}}
\newcommand{\lcal}[0]{{\mathcal{L}}}

\newcommand{\Jamiolkowski}[0]{{Jamio\l kowski}}

\makeatletter
\newcommand{\doublewidetilde}[1]{{%
  \mathpalette\double@widetilde{#1}%
}}
\newcommand{\double@widetilde}[2]{%
  \sbox\z@{$\m@th#1\widetilde{#2}$}%
  \ht\z@=.9\ht\z@
  \widetilde{\box\z@}%
}
\makeatother
\newcommand{\ttsmap}[1]{\doublewidetilde{\mbox{$\mathcal{#1}$}}}
\newcommand{\smap}[1]{\doublewidetilde{\mbox{$#1$}}}
\newcommand{\tmap}[1]{\widetilde{\mbox{$\mathcal{#1}$}}}
\newcommand{\tmapf}[1]{\widetilde{\mbox{$#1$}}}

\newtheorem{lem}{Lemma}
\newtheorem{thm}{Theorem}

\newtheorem{coll}{Corollary}

\theoremstyle{definition}
\newtheorem{rem}{Remark}

\begin{document}

\title{The quantum switch is uniquely defined by its action on unitary operations}

\author{Qingxiuxiong Dong}
 \orcid{0000-0002-2257-115X}
 \affiliation{Department of Physics, Graduate School of Science, The University of Tokyo, 7-3-1 Hongo, Bunkyo-ku, Tokyo 113-0033, Japan}
\author{Marco T{\'u}lio Quintino}
 \email{Marco.Quintino@lip6.fr}
 \orcid{0000-0003-1332-3477}
\affiliation{Sorbonne Universit\' {e}, CNRS, LIP6, F-75005 Paris, France}
 \affiliation{Institute for Quantum Optics and Quantum Information (IQOQI), Austrian Academy of Sciences, Boltzmanngasse 3, A-1090 Vienna, Austria}
 \affiliation{Vienna Center for Quantum Science and Technology (VCQ), Faculty of Physics, University of Vienna, Boltzmanngasse 5, A-1090 Vienna, Austria}
 \affiliation{Department of Physics, Graduate School of Science, The University of Tokyo, 7-3-1 Hongo, Bunkyo-ku, Tokyo 113-0033, Japan}
\author{Akihito Soeda}
 \email{soeda@nii.ac.jp}
 \orcid{0000-0002-7502-5582}
 \affiliation{Principles of Informatics Research Division, National Institute of Informatics, 2-1-2 Hitotsubashi, Chiyoda-ku,Tokyo 101-8430, Japan}
 \affiliation{Department of Informatics, School of Multidisciplinary Sciences, SOKENDAI (The Graduate University for Advanced Studies), 2-1-2 Hitotsubashi, Chiyoda-ku, Tokyo 101-8430, Japan}
 \affiliation{Department of Physics, Graduate School of Science, The University of Tokyo, 7-3-1 Hongo, Bunkyo-ku, Tokyo 113-0033, Japan}
\author{Mio Murao}
 \email{murao@phys.s.u-tokyo.ac.jp}
 \orcid{0000-0001-7861-1774}
 \affiliation{Department of Physics, Graduate School of Science, The University of Tokyo, 7-3-1 Hongo, Bunkyo-ku, Tokyo 113-0033, Japan}
 \affiliation{Trans-scale Quantum Science Institute, The University of Tokyo, 7-3-1 Hongo, Bunkyo-ku, Tokyo 113-0033, Japan}

\maketitle

\begin{abstract}
The quantum switch is a quantum process that creates a coherent control between different unitary operations, which is often described as a quantum process which transforms a pair of unitary operations $(U_1 , U_2)$ into a controlled unitary operation that coherently applies them in different orders as $\proj{0} \otimes U_1 U_2 + \proj{1} \otimes U_2 U_1$.
This description, however, does not directly define its action on non-unitary operations.
The action of the quantum switch on non-unitary operations is then chosen to be a ``natural'' extension of its action on unitary operations.  In general, the action of a process on non-unitary operations is not uniquely determined by its action on unitary operations.  It may be that there could be a set of inequivalent extensions of the quantum switch for non-unitary operations.
We prove, however, that the natural extension is the only possibility for the quantum switch for the 2-slot case.
In other words, contrary to the general case, the action of the quantum switch on non-unitary operations (as a linear and completely CP preserving supermap) is completely determined by its action on unitary operations.
We also discuss the general problem of when the complete description of a quantum process is uniquely determined by its action on unitary operations and identify a set of single-slot processes which are completely defined by their action on unitary operations.
\end{abstract}

\section{Introduction}
A quantum operation given by some physical process would have a predetermined action on a quantum system.
The action can be effectively converted by incorporating the quantum operation as a part of some sequence of quantum operations.
The sequence then serves as a ``higher-order quantum operation'' converting one quantum operation to another.
Quantum instruments~\cite{ozawa83} are a mathematical description of all quantum operations implementable by a fixed sequence of quantum gates within the standard quantum circuit model, including the addition and removal of subsystems, but without postselection.
Probabilistically implementable quantum operations are represented by completely positive (CP) and trace-nonincreasing maps.
Mathematically, a higher-order quantum operation on these quantum operations is represented by a higher-order map\footnote{By a ``map'', we mean a mathematically well-defined function, taking one element of some set to another element of some set.}.
We shall call any higher-order map on CP maps as a \textit{supermap} in this paper.
While a constructive definition of a supermap based on the standard quantum circuit model guarantees physical implementability (at least for theorists), an axiomatic characterization allows us to explore more exotic classes of (possibly unimplementable) supermaps.

Quantum processes\footnote{Some literatures use ``process matrix'' and ``superchannel'' to refer to higher-order maps.} are such supermaps, defined to be linear and return a CPTP map when the input maps are restricted to independent and local\footnote{These CPTP maps may act on operators that belong to a larger Hilbert space than that used to define the quantum processes, so long as the Hilbert space is ``locally'' extended from the defining Hilbert spaces of the quantum processes.} CPTP maps~\cite{indefinite_correlation,indefinite_purification}. Some quantum processes are not implementable within the standard quantum circuit and deemed to have an ``indefinite causal order''.
The causal orders of quantum processes, however, are not ``equally'' indefinite.
The quantum switch~\cite{qswitch} is a ``two-slot'' quantum process that transforms a pair of independent unitary operations into a controlled coherent superposition of these operations in two different orders. 
The quantum switch is well-studied in the context of indefinite causal order~\cite{qswitch,indefinite_correlation,indefinite_purification} to elucidate how the causal structure may affect the quantum information processing in quantum mechanics.
If implemented suitably, the quantum switch would serve as a resource for discriminating non-signaling channels~\cite{qs_discrimination1}, quantum computation~\cite{qs_discrimination2}, and quantum communication ~\cite{qs_comm_tri,qs_comm1,qs_comm2,qs_comm3,qs_comm_exp1,qs_comm_exp2}.

Reference \cite{qswitch} is the first to explicitly introduce the quantum switch, with two definitions.  The first one is valid only when the input CPTP maps are two independent local unitary CPTP maps.
Given two local unitary CPTP maps $\tmap{U}_1$ and $\tmap{U}_2$ characterized by the unitary operators $U_1$ and $U_2$\footnote{
We denote maps defined on linear operators with a tilde.},
the quantum switch transforms them to the unitary CPTP map given by
\begin{align}
(U_1 , U_2) \mapsto \proj{0}_C \otimes U_2 U_1 + \proj{1}_C \otimes U_1 U_2, \label{eq:def_qs_u}
\end{align}
by using each of $\tmap{U}_1$ and $\tmap{U}_2$ only once.  The input maps are applied in two different orders, determined by the state of another qubit (control qubit).
If the control qubit is in $\ket{0}_C$, $\tmap{U}_1$ is applied on the target system first, and followed by $\tmap{U}_2$,
and if the control qubit is in $\ket{1}_C$, $\tmap{U}_2$ is applied first followed by $\tmap{U}_1$.
If the control qubit is in a superposition of $\ket{0}$ and $\ket{1}$, then the two different orders should be ``superposed'' in a coherent way.

The second definition of the quantum switch includes all independent pairs of local CPTP maps as input and chosen to be the most ``natural'' one based on its Kraus representation~\cite{kraus} as follows:
given two CPTP maps represented by the Kraus operators $\{ K_i \}_i $ and $\{ L_j \}_j $, respectively,
the CPTP map of the output from the quantum switch is given by the Kraus operators $\{ W_{ij} \}_{ij}$ defined by
\begin{align}
W_{ij} := \proj{0}_C \otimes L_j K_i + \proj{1}_C \otimes K_i L_j . \label{eq:def_qs_k}
\end{align}
Following this definition\footnote{The quantum switch has been extended to larger numbers of input CPTP maps as in Ref.~\cite{qs_discrimination2}, but is not the scope of this paper.}, different results have been established.
One interesting result is that the quantum switch can help in enhancing certain communication tasks~\cite{qs_comm1,qs_comm2,qs_comm3}.
In particular, if two input CPTP maps are the depolarizing channels\footnote{Here, ``channel'' and ``CPTP map'' are used synonymously, but the former is chosen over the latter to conform to the convention commonly used in the community.}, which do not transfer any information by themselves alone,
the output CPTP map is not a depolarizing channel any more and can be used for transferring information.
The definition of the quantum switch based on Eq.~\eqref{eq:def_qs_k} is widely used in various studies, but before this paper, it was not clear if the action of the quantum switch based on the second definition can be derived uniquely from only Eq.~\eqref{eq:def_qs_u}.

Before proceeding, we present a simple example to show that, in general, the behaviour of a supermap on unitary CPTP maps does not uniquely define its action on general CPTP maps. Our example is illustrated in terms of quantum circuits shown in Fig.~\ref{fig:dep_equiv} and consists in presenting two quantum processes which, despite being different, have identical action on unitary CPTP maps. This example follows from the fact that since unitary CPTP maps are unital, that is,  $\tmap{U}(I)=I$.
If $\tmap{D}(\rho): = (\Tr\, \rho) \cdot I /d $ is the completely depolarizing channel and $\tmapf{id}(\rho) = \rho$ is the identity CPTP map, we have that
	\begin{align} \label{eq:unital}
	\tmap{D}\circ \tmap{U} \circ \tmap{D} 	 &=\tmapf{id}\circ \tmap{U} \circ \tmap{D}  = \tmap{D}.
	\end{align}
But, when considering non-unital CPTP maps $\tmapf{\Lambda}$, that is $\tmapf{\Lambda}(I)\neq I$, the identity in Eq.~\,\eqref{eq:unital} does not hold. In particular, if we take the CPTP map $\tmapf{\Lambda}(\rho) = (\Tr\, \rho)\proj{0}$ which replaces the input state $\rho$ by the quantum state $\proj{0}$, we obtain
\begin{align} \label{eq:non_unital}
	&\tmap{D}\circ \tmapf{\Lambda} \circ \tmap{D} =	 \tmap{D} \\
	& \tmapf{id}\circ \tmapf{\Lambda} \circ \tmap{D} = \tmapf{\Lambda}.
	\end{align}

\begin{figure}
\texttt{\quad $\forall$ all unitary operations $\tmap{U}$:}
\includegraphics[width=\hsize]{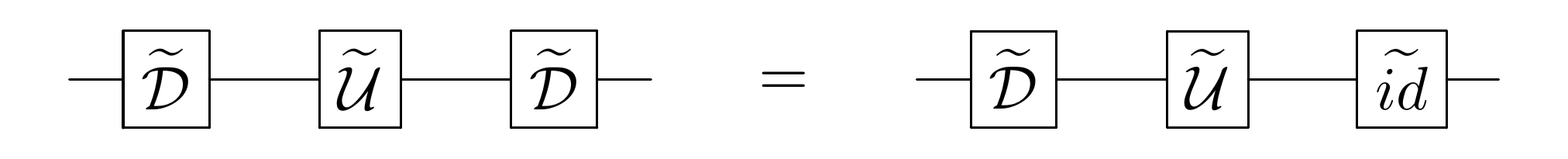}
\texttt{but $\exists$ non-unitary operations such as $\tmapf{\Lambda}$:}
\includegraphics[width=\hsize]{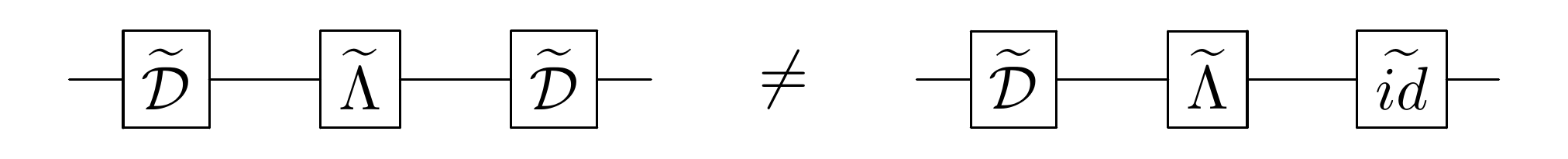}
\caption{Two quantum circuits implementing higher-order maps that transform an arbitrary unitary CPTP map into the depolarizing channel.
Here $\tmap{U}$ denotes the input unitary CPTP map, $\tmap{D}$ denotes the depolarizing channel, and $\tmapf{id}$ denotes the identity CPTP map.
While the actions of both of them are equivalent on unitary CPTP maps, their actions on some CPTP maps,
e.g., $\tmapf{\Lambda}(\rho) = (\Tr\, \rho) \proj{0}$ which replaces the input state $\rho$ by the quantum state $\proj{0}$, are different.
}\label{fig:dep_equiv}
\end{figure}
Although simple, this example can be used to construct infinitely many pairs of higher-order quantum operations that share this property, namely, the action on unitary operations does not uniquely define the action on non-unitary operations. In particular, for any fixed circuit operation, we can construct two new circuits by considering probabilistic mixtures of the fixed operations with the identity channel or the depolarizing channel. For any non-trivial probability mixture, the same argument used in our previous example shows that these two new circuits may act identically on unitary operations, but they act differently for non-unital operations. 
More rigorously, let $C$ be an arbitrary one-slot quantum comb \cite{comb1,comb2,comb3} representing a circuit which a channel into a channel. Let $C_1$ be the comb associated to the circuit that transforms an arbitrary channel $\tmapf{\Lambda}$ into $\tmap{D}\circ \tmapf{\Lambda} \circ \tmap{D} =	 \tmap{D}$ and $C_2$ be the comb associated to the circuit transforms an arbitrary channel $\tmapf{\Lambda}$ into $\tmapf{id}\circ \tmapf{\Lambda} \circ \tmap{D} = \tmapf{\Lambda}$. It follows from Eq.~\eqref{eq:unital} and Eq.~\eqref{eq:non_unital} that for any $0 \leq \eta<1$ the circuits associated to the combs
\begin{align}
    C^\eta_1:=\eta \,C + (1-\eta) C_1 \\
    C^\eta_2:=\eta \,C + (1-\eta) C_2,
\end{align}
are equivalent for unitary operations, but $C^\eta_1$ and $C^\eta_2$ act differently on non-unitary quantum channels.
Interestingly, this argument does not rely on quantum circuits and can be used for any general process (potentially with indefinite causal order~\cite{qswitch}) which transforms quantum operations. We just need to recognize that the two different circuits presented in Fig.~\ref{fig:dep_equiv} are valid processes, and by performing any non-trivial convex combinations, one can always construct examples of two higher-order transformations in which the action on unitary operations does not uniquely define the action on general.

It is worth mentioning that the difficulty of the problem follows from the fact that general CPTP maps cannot be written as a linear combination of unitary ones. For instance, a non-unital CPTP map can never be written as a linear combination of unitary CPTP maps. However, the linearity is not the only condition that supermaps should respect. It is crucial to respect the notion of complete positivity (CP)~\cite{comb1,comb2,comb3,qs_comm1,indefinite_correlation,indefinite_purification}.

The question of whether supermaps are uniquely defined by their action on unitary CPTP maps is first considered in Ref.~\cite{qswitch}, where the authors have analysed the incoherent version of the quantum switch---a bilinear supermap that transforms an independent pair of local CPTP maps into a classically-controlled mixture of the pair applied in two possible orders.
Also in Ref.~\cite{qswitch}, the authors have shown that the incoherent switch is uniquely defined by its action on unitary CPTP maps. Here, we prove that the coherent quantum switch is also uniquely defined by its action on unitary CPTP maps under linearity and completely CP preserving (see definition below).

When dealing with the ``lower-order case'' where one considers CPTP maps which transform quantum states (or, rather, density operators) into quantum states, the linearity itself is enough to ensure that the action of the maps on pure quantum states uniquely defines the action of the maps on any quantum state\footnote{The same applies to CP instruments.}. This holds because the linear space spanned by pure quantum states contains all self-adjoint operators.
An analogous problem then would be to define the action of a CPTP map for a restricted set of pure states and then analyse how the CP condition restricts the extension of this map to the set of all quantum states in a non-trivial way, a problem which is discussed in Refs.~\cite{chefles00,chefles03,extending,ncopy}. We also remark Ref.~\cite{haapasalo14} which investigates when the marginals of a linear map uniquely determine its action and Ref.~\cite{guerini18} which analyses when compatible tuples of measurements admit a unique joint measurement.

In this paper, we analyse how the action of supermaps on unitary CPTP maps restricts their action on general CPTP maps. In particular, we show that the constraints of linearity and completely CP preserving ensure that the action of the quantum switch on any CP map (including non-TP ones) is uniquely defined by its action on unitary CPTP maps.  We then consider the general problem of understanding the restrictions imposed by the completely CP preserving and apply our proof techniques to problems which are not directly related to the quantum switch. In particular, we identify a class of single-slot supermaps (which take one CPTP map as input) such that, similarly to the quantum switch, their action on unitary CPTP maps uniquely defines their action on general CP maps. This class includes identity supermaps, unitary transposition, and qubit unitary complex conjugation.

This paper is organized as follows.
In Sec.~\ref{sec:qs}, we provide the statement of our main theorem.
Section~\ref{sec:span} discusses several consequences that can be obtained only from the linearity that are also used in later proofs.
The results on the single-slot supermaps are derived in Sec.~\ref{sec:single}.
Finally, we conclude in Sec.~\ref{sec:conclusion}. The proofs and technical methods are presented in the Appendix.

\section{The Uniqueness of Quantum Switch}\label{sec:qs}
Linearity and completely CP preserving (to be defined precisely, below) are two operationally motivated conditions on supermaps.
The linearity condition is required for the supermap to accommodate the possibility of probabilistic mixtures in quantum theory~\cite{barrett07}.
A supermap must also be completely CP preserving so that it always returns a CP map when the input CP map acts on extended Hilbert spaces~\cite{comb2,comb3}.  This is to comply with the possibility that any quantum system can be extended by adding auxiliary systems.

In addition to the complete CP preserving, a supermap may be imposed a set of normalization constraints, namely that a pair of quantum channels transform into a quantum channel~\cite{qswitch}. These constraints imply that all supermaps lie in some particular linear subspace. Interestingly, we do not impose such constraints for proving our result. These normalization constraints will follow as consequences of the linearity, completely CP preserving, and Eq.~\eqref{eq:def_qs_u}.

In this paper, we start from a two-slot supermap whose action on unitary operations matches that of the quantum switch and derive the action of the supermap on arbitrary quantum operations.
In particular,
we derive the quantum switch from the following three conditions:
 \begin{enumerate}
 \item The action of a two-slot supermap on unitary operations is given by Eq.~\eqref{eq:def_qs_u}.
 \item The supermap is linear.
 \item The supermap is a completely CP preserving supermap.
 \end{enumerate}

\begin{thm}\label{thm:qs_unique}
If a 2-slot supermap is linear and completely CP preserving and its
action on any independent pair of local unitary CPTP maps is given by Eq.~\eqref{eq:def_qs_u},
then its action on arbitrary independent pair of local CPTP maps is uniquely determined by the Kraus operators Eq.~\eqref{eq:def_qs_k}.
\end{thm}

\begin{figure}
\centering
\includegraphics[width=0.6\hsize]{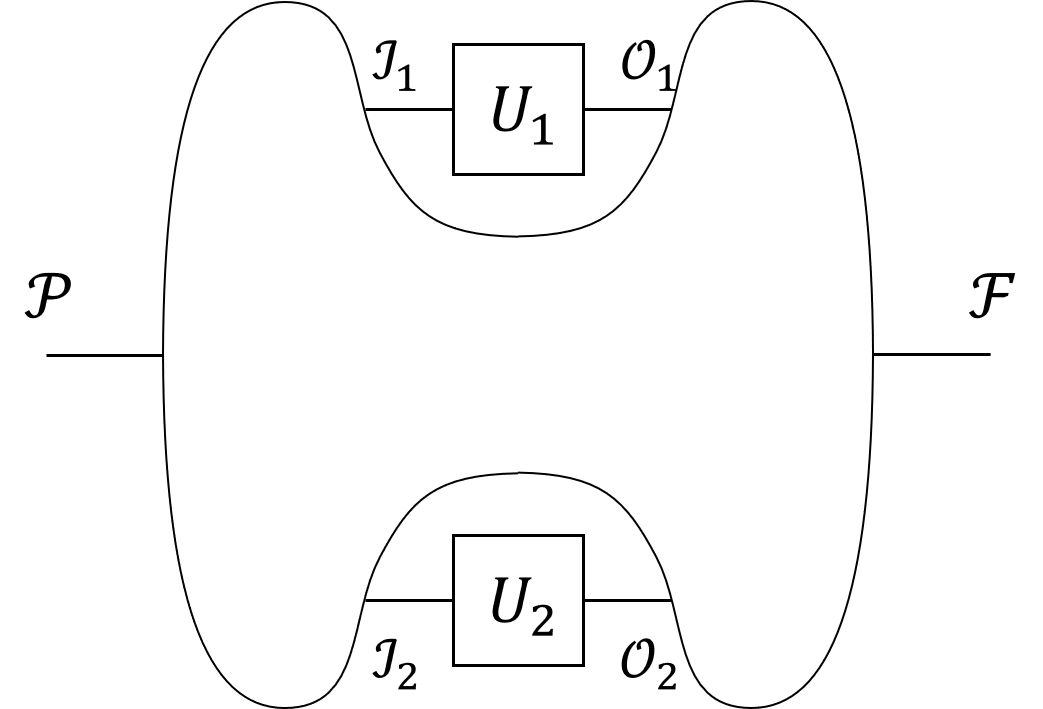}
\caption{Conceptual figure of a two-slot supermap which may not respect a definite causal order, such as the quantum switch. Two-slot supermaps transform an independent pair of local CPTP maps as input and output a CPTP map from $\pcal$ to $ \fcal$. In general, the action of a supermap on unitary CPTP maps does not uniquely define its action on non-unitary CPTP maps.
}\label{fig:qs}
\end{figure}

In order to prove Theorem~\ref{thm:qs_unique}, we use the Choi representation of supermaps~\cite{comb1,comb2,choi,jamiolkowski}.
That is, since the Choi-{\Jamiolkowski} isomorphism is a bijection,
we prove the uniqueness of the quantum switch by proving that the corresponding Choi operator $W$ is uniquely determined.
Representing supermaps~\cite{comb2} as matrices via such an isomorphism is the same approach taken in studying “process matrices” as in Ref.~\cite{Oreshkov_2016}.

We now introduce the Choi representation of maps on linear operators ~\cite{choi,jamiolkowski}. Then we discuss how to represent supermaps as linear operators via the Choi representation~\cite{comb1,comb3}. Let $\mathcal{I}$ and $\mathcal{O}$ be a Hilbert space, and $\mathcal{L}(\mathcal{I})$ and $\mathcal{L}(\mathcal{O})$ the set of linear operators acting on them. The Choi operator of a map $\tmapf{\Lambda}: \lcal(\ical) \to \lcal(\ocal)$ transforming operators from $\mathcal{L}(\mathcal{I})$ to $\mathcal{L}(\mathcal{O})$ is given by
\begin{align}
J_{\tilde{\Lambda}} := \sum_{ij} \ketbra{i}{j} \otimes \tmapf{\Lambda}( \ketbra{i}{j}) \in \lcal(\ical \otimes \ocal),
\end{align}
where $\{ \ket{i} \}$ is a computational basis for $\ical$.
The condition that a linear map $\tmapf{\Lambda}$ is completely positive (CP) corresponds to the positivity of $J_{\tilde{\Lambda}}$, i.e., $J_{\tilde{\Lambda}} \geq 0$.
Unitary CPTP maps are the ones which can be written as $\tmap{U}(\rho)=U\rho U^\dagger$, where $U$ is a unitary operator.
The Choi operator associated to a unitary CPTP map will have a subscript in a Roman typeface, for example, $J_U := \sum_{ij} \ketbra{i}{j} \otimes (U \ketbra{i}{j}U^\dagger)$.

A supermap is a map of
\begin{equation} \label{eq:smap}
  \ttsmap{W}:  [\lcal(\ical) \to  \lcal(\ocal)]  \to
	  [\lcal(\pcal) \to  \lcal(\fcal)],
\end{equation}
where $\pcal$ and $\fcal$ are the input and output Hilbert spaces of the output CP map, respectively.
$\ttsmap{W}$ is linear if
\begin{equation}
  \ttsmap{W} (\alpha \tmap{A} + \beta \tmap{B}) = \alpha \ttsmap{W} ( \tmap{A}) + \beta \ttsmap{W} ( \tmap{B} )
\end{equation}
for arbitrary input CP maps $\tmap{A}$ and $\tmap{B}$ and arbitrary $\alpha, \beta \in \mathbb{C}$.
It is CP preserving if it transforms CP maps into CP maps.

A linear supermap is completely CP preserving if all of its trivial extensions transform CP maps into CP maps, that is, for every CP map which may make use of an auxiliary space $\tmap{C}:\lcal(\ical\otimes \mathcal{H}_\text{aux}) \to  \lcal(\ocal\otimes\mathcal{H}_\text{aux})$,  the output map $\ttsmap{W} \otimes \smap{id}( \tmap{C} )$
is also CP, where $\smap{id}:\lcal(\mathcal{H}_\text{aux}) \to  \lcal(\mathcal{H}_\text{aux})$ is the identity supermap on the auxiliary space defined by $\smap{id}(\tmap{A})=\tmap{A}$ for every map $\tmap{A}$.

Supermaps may also be viewed as maps on the Choi operators of the input CP maps. Hence, by taking the Choi representation of this induced map, we can also represent a supermap as in Eq.~\,\eqref{eq:smap} by an operator ${W\in \lcal(\pcal \otimes \ical \otimes \ocal \otimes \fcal)}$.
More precisely, if a supermap $\ttsmap{W}$ acts on $\tmap{A}$, the Choi operator of its output CPTP map $\ttsmap{W}(\tmap{A})$ is given by~\cite{comb1,comb3,qswitch}
\begin{equation}
\Tr_{\ical\ocal} \Big[W (I_\pcal \otimes A_{\ical\ocal} \otimes I_\fcal)^t\Big],
\end{equation}
where $A_{\ical\ocal} \in\lcal(\ical \otimes \ocal)$ is the Choi operator of $\tmap{A}$ and $t$ stands for transposition in the computational basis.
In the Choi operator representation, the completely CP preserving condition of $\ttsmap{W}$ is simply given by the positivity constraint $W \geq 0$~\cite{comb1,comb3}.
A more detailed introduction to the Choi operator representations of supermaps can be found, for instance, in the introductory sections of Ref.~\cite{indefinite_reversible} and~\cite{unitary_inversion2}, also references therein.

The quantum switch is a 2-slot supermap, namely that the $\ical$ and $\ocal$ are each composed of two Hilbert spaces as $\ical = \ical_1 \otimes \ical_2$ and $\ocal = \ocal_1 \otimes \ocal_2$ (see Fig.\,\ref{fig:qs} for assignments of the Hilbert spaces).
Here, $\ical_k, \ocal_k \cong \mathbb{C}^d$ for $k=1,2$ and $\pcal, \fcal \cong \mathbb{C}^{2d}$.
In particular, $\pcal = \pcal_C \otimes \pcal_T$ and $\fcal = \fcal_C \otimes \fcal_T$, where $\pcal_T, \fcal_T \cong \mathbb{C}^d$ and $\pcal_C, \fcal_C \cong \mathbb{C}^2$, the latter corresponding to the control qubit.
We denote the Choi operator of the quantum switch defined on any independent pair of local CPTP maps $\tmap{A} \otimes \tmap{B}$, where $\tmap{A} : \lcal(\ical_1) \to \lcal(\ocal_1)$ and $\tmap{B} : \lcal(\ical_2) \to \lcal(\ocal_2)$, by the Kraus operators given in Eq.~\eqref{eq:def_qs_k} as
 $W_0 \in \lcal(\pcal \otimes \ical_1 \otimes  \ocal_1\otimes \ical_2\otimes \ocal_2\otimes \fcal)$.
This operator can be calculated~\cite{qswitch} as $W_0 := \proj{W_0}$ with
\begin{align}
\ket{W_0} &:= \ket{00}_{\pcal_C \fcal_C} \kket{I}_{\pcal_T \ical_1} \kket{I}_{\ocal_1 \ical_2} \kket{I}_{\ocal_2 \fcal_T} \notag \\
&\quad+ \ket{11}_{\pcal_C \fcal_C} \kket{I}_{\pcal_T \ical_2} \kket{I}_{\ocal_2 \ical_1} \kket{I}_{\ocal_1 \fcal_T} \notag \\
&= \sum_{i,j,k=0}^{d-1} ( \ket{ijjkik00} + \ket{jkijik11} ), \label{eq:def_qs_w0}
\end{align}
where the Hilbert spaces are rearranged for ease of presentation, $\kket{I}:=\sum_{i}\ket{ii}$, and the Hilbert spaces for the vector in the last line is in the order of $\ical_1 \otimes \ocal_1 \otimes \ical_2 \otimes \ocal_2 \otimes \pcal_T  \otimes \fcal_T \otimes \pcal_C \otimes \fcal_C$.
Note that the normalization of the Choi operators is given by $\Tr [W] = \Tr [W_0] = 2 d^3$, and each element of the matrix $W_0$ is 0 or 1.

Denoting the input unitary CPTP map by $J = J_{U_1} \otimes J_{U_2}$, the action of quantum switch is given by\footnote{When we write  $\Tr_{in} [W_0 J^t]$, we are suppressing the identity operators, that is, in the full notation we have  $\Tr_{in}\big[ W_0 (I_\pcal\otimes J^t\otimes I_\fcal)\big]$.  If clear from context, we will suppress the identity operators under traces.}
 $\Tr_{in} [W_0 J^t]$, where $\mathcal{H}_{in}:=\ical_1 \otimes  \ocal_1\otimes \ical_2\otimes \ocal_2$ stands for the total input-channel space of the quantum switch.
In terms of the Choi representation, Theorem~\ref{thm:qs_unique} is equivalent to the following Lemma.
\begin{lem}\label{lem:qs_unique}
If the Choi operator $W$ of a 2-slot linear supermap satisfies $W \geq 0$ (i.e., completely CP preserving) and $\Tr_{in} [ W (J_{U_1} \otimes J_{U_2})^t ] = \Tr_{in} [ W_0 (J_{U_1} \otimes J_{U_2})^t] $ for all $U_1$ and $U_2$, then $W = W_0$.
\end{lem}
The main proof of Lemma~\ref{lem:qs_unique} is given in Appendix~\ref{ap:proof_qs}.
Here we provide a sketch of the proof.
We first consider the linear span of the Choi operator of unitary CPTP maps.
In Sec.~\ref{sec:span} and Appendix~\ref{ap:span}, we provide a set of operators that are in the linear span.
In Appendix~\ref{ap:proof_qs}, we consider the consequences of the positivity of the Choi operator $W \geq 0$, which also relies on the results on the linear span.
Remark that the Choi operator of the quantum switch given by Eq.~\eqref{eq:def_qs_w0} has a simple form, namely,
it is a matrix such that all the elements are either 1 or 0 and that any submatrix whose diagonal elements are 1 has all other elements being 1.
First we prove that all the diagonal elements of $W$ are either 1 or 0, with some calculations are separated into Appendix~\ref{ap:diag}.
After determining all the diagonal elements, it remains to show that all other elements are either 0 or 1.
To do so, we evaluate the element-wise 1-norm, i.e., the sum of absolute values of all elements in the computational basis, where the main calculations are separated into Appendix~\ref{ap:offdiag}.

\section{The Linear Span of the Choi Operators of Unitary Operations}\label{sec:span}

From the linearity, the action on any input CP map that is a linear combination of unitary CPTP maps follows from Eq.~\eqref{eq:def_qs_u}.
For any $J \in \mathrm{span} \{  J_{U_1} \otimes J_{U_2}  \} $\footnote{Here, $J_{U_1} \otimes J_{U_2}$ is treated as a collection of all the Choi operators of local unitary CPTP maps.  We use this as a shorthand notation of $\mathrm{span} \{ \{ J_{U_1} \otimes J_{U_2} \}_{U_1,U_2 \in SU(d)} \}$.}, where
\begin{align}
&\mathrm{span} \{ J_{U_1} \otimes J_{U_2} \} \\
&\quad:= \{ C \mid C = \sum_{i_1, i_2} c_{i_1 i_2} (J_{U_{i_1}} \otimes J_{U_{i_2}} ), c_{i_1 i_2} \in \mathbb{C}, \notag\\
&\quad\quad\quad U_{i_1},U_{i_2} \in SU(d) \notag  \},
\end{align}
the action of the quantum switch on $J$ is given by $\Tr_{in} [W_0 J^t]$.
Since $\mathrm{span} \{ J_{U_1} \otimes J_{U_2} \} = \mathrm{span} \{ \mathrm{span} \{ J_U \} \otimes \mathrm{span} \{ J_U \} \}$,
tensor products of the bases of $\mathrm{span} \{ J_U \}$ form a linearly independent set of elements in $\mathrm{span} \{ J_{U_1} \otimes J_{U_2} \}$.

We first summarize the results in this section:
as we prove in Appendix~\ref{ap:span}, an operator $C$ belongs to $\mathrm{span} \{ J_U \}$ if and only if
\begin{align} \label{Eqcondition}
	\Tr_{\mathcal{I}}[C]=& \Tr[C] \, \frac{I_d}{d}, \\
	\Tr_{\mathcal{O}}[C]=& \Tr[C] \, \frac{I_d}{d}.
\end{align}
Hence, by evaluating partial traces, one can verify that:
\begin{enumerate}
\item $\ketbra{ii}{jj}, \ketbra{ij}{ji}  \in \mathrm{span} \{ J_U \}$ for  $i \neq j$.
\item $\ketbra{ij}{i'j'} \in \mathrm{span} \{ J_U \}$ when the indices $i,j,i',j'$ all take a different value.
\item $\ketbra{ij}{i'j'} \in \mathrm{span} \{ J_U \}$ when the indices $i,j,i',j'$ take three distinct values, but such that their partial traces all return a zero operator.
\item $\sum_{j=i+k} \ketbra{ij}{ij} \in \mathrm{span} \{ J_U \}$ for $k=0,\ldots,d-1$.
\item $( \ketbra{ij}{ii} - \ketbra{jj}{ji} ) , ( \ketbra{ji}{ii} - \ketbra{jj}{ij} ) \in \mathrm{span} \{ J_U \}$ for $i \neq j$.
\end{enumerate}

We remark that the set of the operators listed here does not span the whole space of $\mathrm{span} \{ J_U \}$ unless $d=2$.
There are only $2d(d-1) + d(d-1)(d-2)(d-3) + 4d(d-1)(d-2) + d + 2d(d-1) = d (d^3 -2 d^2  + 3 d  -1)$ elements here,
but the dimension of $\mathrm{span} \{ J_U \}$ is $(d^2-1)^2 + 1$ (Lemma~3 of Ref.~\cite{sn}), which is strictly greater than $d (d^3 -2 d^2  + 3 d  -1)$ for $d > 2$.
We show in Appendix~\ref{ap:proof_qs} that Lemma~\ref{lem:qs_unique} can be proved by only considering the action of the quantum switch on these operators.
The action of the quantum switch on all unitary CPTP maps is not necessary to uniquely determine the action of the quantum switch on general CP maps.

Another important fact is that the action of the quantum switch on the depolarizing channels is determined only by the linearity,
because
\begin{align}
\frac{I_{d^2}}{d} \otimes \frac{I_{d^2}}{d} \in \mathrm{span} \{ J_{U_1} \otimes J_{U_2} \}
\end{align}
holds, where $\frac{I_{d^2}}{d}$ is the Choi operator of the depolarizing channel (which may be viewed as a probabilistic mixture of uniformly random unitary operations).
Thus, the results of Refs.~\cite{qs_comm1,qs_comm2,qs_comm3} can be obtained with only the assumption of the linearity.

\section{Uniqueness of single input supermaps}\label{sec:single}

In this section, we provide a similar theorem on the supermaps with a single input CP map instead of the quantum switch.
Consider the (linear) supermap $\ttsmap{C}$ which act on as the identity supermap on unitary CPTP maps, that is, for every
$\tmap{U}$ we have $\ttsmap{C}(\tmap{U})=\tmap{U}$.
We show that by imposing that  $\ttsmap{C}$  is completely CP preserving, $\ttsmap{C}$ also behaves as the identity supermap for any arbitrary CP map $\tmapf{\Lambda}$, that is, $\ttsmap{C}(\tmapf{\Lambda})=\tmapf{\Lambda}$.
In other words, we have that $C = C_0$, where
\begin{align}
C_0 := \sum_{i,j,k,l=1}^d \ketbra{ij}{kl} \otimes \ketbra{ij}{kl} \in \lcal( \ical \otimes \ocal \otimes \pcal \otimes \fcal), \label{eq:def_id_supermap}
\end{align}
the Choi operator of the identity supermap.
\begin{thm}\label{thm:id_unique}
If the action of a 1-slot linear supermap on unitary CPTP maps is given by $\tmap{U} \mapsto \tmap{U}$ for an arbitrary input unitary operation $\tmap{U}$ and is completely CP preserving,
then its action on an arbitrary CP map $\tmapf{\Lambda}$ is given by $\tmapf{\Lambda} \mapsto \tmapf{\Lambda}$.
\end{thm}
Similarly, Theorem~\ref{thm:id_unique} is equivalent to the following Lemma.
\begin{lem}\label{lem:id_unique}
Let $C_0$ be the Choi operator of the identity supermap defined by Eq.~\eqref{eq:def_id_supermap}.
If there exists $C \geq 0$ such that $\Tr_{\ical \ocal} [ C J_{U}^t ] = \Tr_{\ical \ocal} [ C_0 J_{U}^t ] $ for all ${U} $,
then $C = C_0$.
\end{lem}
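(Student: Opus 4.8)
The plan is to prove Lemma~\ref{lem:id_unique} by combining the linearity constraints of Sec.~\ref{sec:span} with the positivity constraint $C \geq 0$. First I would observe that the hypothesis $\Tr_{\ical\ocal}(C J_U^t) = \Tr_{\ical\ocal}(C_0 J_U^t)$ for all unitary $U$, together with linearity of the partial trace, immediately implies $\Tr_{\ical\ocal}(C J^t) = \Tr_{\ical\ocal}(C_0 J^t)$ for every $J \in \mathrm{span}\{J_U\}$. By the results listed in Sec.~\ref{sec:span} (items 1--5, with $d=2$ giving the full span), we can feed into this relation every operator $\ketbra{ij}{i'j'}$ appearing in $C_0 = \sum \ketbra{ij}{kl}\otimes\ketbra{ij}{kl}$, and directly read off the corresponding matrix elements of $C$ in the $\pcal\otimes\fcal$ block. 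This already pins down most of $C$; the entries it does not directly determine are those where $C_0$ vanishes but where the $\mathrm{span}\{J_U\}$ operators only appear in certain linear combinations (the item-5 combinations $\ketbra{ij}{ii}-\ketbra{jj}{ji}$), so the goal of the rest is to show the remaining freedom is killed by $C\geq 0$.

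The second step is to use positivity to conclude $C = C_0$. Since $C_0$ is (up to scaling) a rank-one projector onto $\ket{C_0} = \sum_{ij}\ket{ij}_{\ical\ocal}\ket{ij}_{\pcal\fcal}$, I would aim to show that the diagonal elements of $C$ in the computational basis coincide with those of $C_0$: once a positive semidefinite matrix has a known set of vanishing diagonal entries, the corresponding rows and columns vanish identically, which forces all the ``off-support'' entries of $C$ to be zero. Concretely, the linearity relations already give the diagonal of $C$ restricted to the $\pcal\otimes\fcal$ indices summed against test operators; combined with item~4 of Sec.~\ref{sec:span} ($\sum_{j=i+k}\ketbra{ij}{ij}$) one recovers enough diagonal information that $C$ must be supported on the range of $\ket{C_0}$, and then the off-diagonal information forces $C$ proportional to $\proj{C_0}$, with the proportionality fixed to $1$ by matching against, say, $\ketbra{00}{00}\otimes I$-type test inputs or by the trace. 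This mirrors the two-part structure (diagonal first, then off-diagonal via a norm estimate) described for the switch, but is much simpler because $C_0$ lives on a single Hilbert space $\ical\otimes\ocal\otimes\pcal\otimes\fcal$ rather than the six-fold tensor product of the switch.

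The main obstacle I expect is the same one flagged in Sec.~\ref{sec:span}: for $d > 2$ the explicitly listed operators do not span all of $\mathrm{span}\{J_U\}$, so one cannot simply read off every relevant matrix element of $C$ from linearity alone, and one genuinely needs the positivity argument to cover the gap. The delicate point is handling the entries that are only constrained through the item-5 difference operators $\ketbra{ij}{ii}-\ketbra{jj}{ji}$ and $\ketbra{ji}{ii}-\ketbra{jj}{ij}$: linearity tells us only about a difference of two matrix elements of $C$, and I would need positivity (specifically, that the diagonal entries $\bra{\cdot}C\ket{\cdot}$ at the indices $\ketbra{jj}{\cdot}$ and $\ketbra{ii}{\cdot}$ are forced to their $C_0$ values, which are $0$ and $1$ respectively) to break the tie. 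Once the diagonal is nailed down, the rest follows from the elementary fact that a zero diagonal entry of a PSD matrix zeros out its row and column; I would carry that out carefully but it is routine. The detailed computation supporting the single-slot case is what Appendix~\ref{ap:proof_id} is for, and I would defer the index bookkeeping there.
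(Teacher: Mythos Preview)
Your two-part skeleton (fix the diagonal of $C$ first, then use the fact that a zero diagonal entry of a PSD matrix kills its row and column) matches the paper's proof. But there is a genuine gap in how you propose to nail down the diagonal. Item~4 gives only $\sum_i \bra{i,i+m,k,l}\,C\,\ket{i,i+m,k,l} = \delta_{l,k+m}$; together with positivity this forces $\bra{ijkl}C\ket{ijkl}=0$ whenever $j-i\neq l-k\pmod d$, but it does \emph{not} resolve the individual diagonals with $j-i=l-k$ and $(i,j)\neq(k,l)$: e.g.\ $\bra{0011}C\ket{0011}$ and $\bra{1111}C\ket{1111}$ only appear in a sum equal to $1$. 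So item~4 alone does not show $C$ is supported on $\mathrm{span}\{\ket{C_0}\}$. The missing mechanism is a $2\times 2$ principal-minor argument: item~1 gives the \emph{off}-diagonal entries $\bra{ijij}C\ket{jiji}=1$ and $\bra{iiii}C\ket{jjjj}=1$ for $i\neq j$, and positivity of the $2\times 2$ block on $\{\ket{ijij},\ket{jiji}\}$ (resp.\ $\{\ket{iiii},\ket{jjjj}\}$) then forces the product of the two diagonal entries to be $\geq 1$; since each is $\leq 1$ by the sum constraint, all $d^2$ of them equal $1$ and every other diagonal vanishes. This is the crux of the paper's argument, and your proposal does not identify it.

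For the off-diagonal part, your plan to use the item-5 differences would actually work (once the diagonal support is known, one summand of each item-5 relation lands in a zero row and the other reads off a single entry), so your emphasis on item~5 as the main obstacle is not wrong, just unnecessary. The paper sidesteps items~2, 3, 5 entirely: after the diagonal is fixed, it feeds in a single $J_U\in\mathrm{span}\{J_U\}$ with all matrix entries nonzero (the Choi operator of the Fourier transform), so that the relation $\bra{ij}\tmap{C}(J)\ket{kl}=\bra{ijij}C\ket{klkl}\,\bra{ij}J\ket{kl}$ collapses to one term and yields $\bra{ijij}C\ket{klkl}=1$ for every $i,j,k,l$ at once. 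In particular there is no ``norm estimate'' step here; that technique is reserved for the switch. The real difficulty for Lemma~\ref{lem:id_unique} is the $2\times 2$ positivity step you left out, not the item-5 bookkeeping.
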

\noindent The proof of Lemma~\ref{lem:id_unique} is given in Appendix~\ref{ap:proof_id}.

While Theorem~\ref{thm:id_unique} only holds for supermaps which act as the identity on unitary CPTP maps, this
result can be easily extended to cases to supermaps which act on unitary CPTP maps as $\tmap{U} \mapsto \tmap{B} \circ \tmap{U} \circ \tmap{A}$, where $\tmap{A}$ and $\tmap{B}$ are fixed unitary CPTP maps that do not make use of any auxiliary system. This class of supermaps is precisely the set of reversibility preserving one-slot superchannels (also known as one-partite pure processes~\cite{indefinite_purification}) which do not modify the dimension of the input CP map~\cite{indefinite_reversible}.

\begin{coll}\label{coll:single_unique}
If the action of a 1-slot supermap $\ttsmap{C'}$ on unitary CPTP maps is given by $\tmap{U} \mapsto \tmap{B} \circ \tmap{U} \circ \tmap{A}$ for an arbitrary input unitary CPTP map $\tmap{U}$ and two fixed unitary CPTP maps $\tmap{A}$ and $\tmap{B}$ and is completely CP preserving,
then its action on arbitrary CP maps is given by $\tmapf{\Lambda} \mapsto \tmap{B} \circ \tmapf{\Lambda} \circ \tmap{A}$ for an arbitrary input CP map $\tmapf{\Lambda}$.
\end{coll}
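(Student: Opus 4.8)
The plan is to deduce Corollary~\ref{coll:single_unique} directly from Theorem~\ref{thm:id_unique} by ``absorbing'' the two fixed unitary operations $\tmap{A}$ and $\tmap{B}$ into the supermap. Since $\tmap{A}$ and $\tmap{B}$ do not modify the dimension of the input operation, their inverse unitary channels $\tmap{A}^{-1}$ and $\tmap{B}^{-1}$ act on the appropriate Hilbert spaces and all compositions below are well defined. Given a 1-slot supermap $\ttsmap{C}$ satisfying the hypotheses of the corollary, I would define the modified supermap $\ttsmap{C}'$ by $\ttsmap{C}'(\tmapf{\Lambda}) := \tmap{B}^{-1} \circ \ttsmap{C}(\tmapf{\Lambda}) \circ \tmap{A}^{-1}$ for every input operation $\tmapf{\Lambda}$; in words, $\ttsmap{C}'$ sandwiches the output operation of $\ttsmap{C}$ between the fixed unitary channels $\tmap{A}^{-1}$ and $\tmap{B}^{-1}$.

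The first step is to verify that $\ttsmap{C}'$ meets the hypotheses of Theorem~\ref{thm:id_unique}. On an arbitrary input unitary operation $\tmap{U}$ one has $\ttsmap{C}'(\tmap{U}) = \tmap{B}^{-1} \circ \tmap{B} \circ \tmap{U} \circ \tmap{A} \circ \tmap{A}^{-1} = \tmap{U}$, so $\ttsmap{C}'$ acts as the identity supermap on unitary operations. Moreover $\ttsmap{C}'$ differs from $\ttsmap{C}$ only by a fixed unitary post-processing of the output operation, so it is again a linear 1-slot supermap that uses the input operation exactly once. Finally, $\ttsmap{C}'$ is completely CP preserving: for every CP map $\tmap{N}$ acting on the input space together with an auxiliary space one has $\ttsmap{C}' \otimes \smap{id}(\tmap{N}) = (\tmap{B}^{-1} \otimes \tmapf{id}_\text{aux}) \circ \big[\, \ttsmap{C} \otimes \smap{id}(\tmap{N}) \,\big] \circ (\tmap{A}^{-1} \otimes \tmapf{id}_\text{aux})$, which is a composition of CP maps --- the bracketed map is CP because $\ttsmap{C}$ is completely CP preserving, and unitary channels are CP --- hence CP. Equivalently, in the Choi picture this step only requires that linking the Choi operator of $\ttsmap{C}$ with the rank-one positive Choi operators of $\tmap{A}^{-1}$ and $\tmap{B}^{-1}$ preserves positivity.

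Having checked the hypotheses, Theorem~\ref{thm:id_unique} (equivalently Lemma~\ref{lem:id_unique}) gives $\ttsmap{C}'(\tmapf{\Lambda}) = \tmapf{\Lambda}$ for every input operation $\tmapf{\Lambda}$. Undoing the sandwiching, i.e., composing with $\tmap{B}$ on the left and $\tmap{A}$ on the right, yields $\ttsmap{C}(\tmapf{\Lambda}) = \tmap{B} \circ \tmapf{\Lambda} \circ \tmap{A}$ for every $\tmapf{\Lambda}$, which is the assertion of the corollary.

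I do not expect a real obstacle here: the argument is a routine reduction to Theorem~\ref{thm:id_unique}. The only points that deserve a little care are the Hilbert-space bookkeeping, which is controlled by the dimension-preservation hypothesis, and the claim that pre- and post-composition with fixed unitary channels preserves the completely CP preserving property, which is immediate from the displayed factorization.
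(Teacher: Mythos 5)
Your proof is correct and takes essentially the same route as the paper: both reduce the claim to Theorem~\ref{thm:id_unique} by absorbing the fixed unitaries $\tmap{A}$ and $\tmap{B}$ into the supermap, observing that this preserves linearity and the completely CP preserving (positivity) condition. The only cosmetic difference is that you cancel $\tmap{A}$ by pre-composing the \emph{output} operation on $\pcal$ with $\tmap{A}^{-1}$, whereas the paper conjugates the Choi operator on the slot's input space $\ical$ and reparametrizes the input unitary as $UA^{*}$; the two reductions are equivalent.
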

\begin{proof}
Corollary~\ref{coll:single_unique} can be reduced to Theorem~\ref{thm:id_unique} as follows.
In the Choi operators, the action of $\ttsmap{C'}$ on unitary CPTP maps is
\begin{align}
\Tr_{\ical \ocal} [ C' J_{U}^t ] = J_{BUA},
\end{align}
where $A$ and $B$ are unitary operators defining $\tmap{A}$ and $\tmap{B}$, respectively.
By introducing $C = (A_{\ical} \otimes B_{\fcal})^\dag C' (A_{\ical} \otimes B_{\fcal})$,
the l.h.s. can be written as
\begin{align}
\Tr_{\ical \ocal} [ C' J_{U}^t ]
&= \Tr_{\ical \ocal} [(A_\ical \otimes B_\fcal) C (A_\ical \otimes B_\fcal)^\dag J_U^t] \notag \\
&= B_\fcal \{ \Tr_{\ical \ocal} [C ( A_\ical^\dag J_U^t A_\ical )] \} B_\fcal^\dag \notag \\
&= B_\fcal \{ \Tr_{\ical \ocal} [C ( J_{AU} )^t ] \} B_\fcal^\dag.
\end{align}
We also have that
\begin{align}
J_{BUA} = B_\fcal \{ \Tr_{\ical \ocal} [C_0 ( J_{AU} )^t ] \} B_\fcal^\dag,
\end{align}
where $C_0$ is the operator defined in Eq.~\eqref{eq:def_id_supermap}.
Since $AU$ is also a unitary operator and spans the whole linear span of $\mathrm{span}\{ J_U \}$,
the uniqueness of $C'$ is equivalent to the uniqueness of the identity supermap $C$.
\end{proof}

Corollary~\ref{coll:single_unique} can be used to show that the supermap leading to qubit unitary complex conjugation analysed in Ref.~\cite{unitary_conjugate} is uniquely defined by its action on unitary operations. This holds true because
$\tmap{U}^* = \tmap{Y} \circ \tmap{U} \circ \tmap{Y}$, where $\tmap{Y}$ denotes the Pauli Y unitary operation.

Another important supermap that can be proven to be unique is the unitary transposition~\cite{unitary_inversion1,unitary_inversion2}. Here it is interesting to remark that unitary transposition cannot be implemented deterministically and its probability of success is limited by $1/d^2$. Note however that our proofs only rely on the linearity and the completely CP preserving and do not make use of normalization or causal constraints. Due to that, we can also tackle probabilistic higher-order quantum operations such as unitary transposition.

\begin{coll}[Unitary transposition]\label{coll:transpose_unique}
Suppose a 1-slot linear and completely CP preserving supermap transforms a unitary CPTP map determined by a unitary operator $U$ into another unitary CPTP map determined by its transpose operator $U^t$ with some fixed success probability $p$. More formally, the Choi operator $C\geq0$ of the supermap satisfies $\Tr_{\ical \ocal} [ C J_{U}^t ]=p J_{U^t}$ for all unitary operators $U$ and some $p$. The action of $C$ on a general CP map with the Choi operator $J_\Lambda\in\mathcal{L}(\mathcal{I}\otimes\mathcal{O})$ is necessarily given by $\Tr_{\ical \ocal} [ C J_\Lambda^t ] =p \, F  J_\Lambda F$, where $F$ denotes the flip (swap) operator defined by $F \ket{\psi \phi} = \ket{\phi \psi}$ for all $\ket{\phi}$ and $\ket{\psi}$.
\end{coll}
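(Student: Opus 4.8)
The plan is to reduce Corollary~\ref{coll:transpose_unique} to Lemma~\ref{lem:id_unique} by a change of variables implemented by the swap operator $F$, exactly in the spirit of the reduction used for Corollary~\ref{coll:single_unique}. The only genuinely new ingredient is the algebraic identity relating $J_{U^T}$ to $J_U$. Writing $\kket{U}:=(I\otimes U)\kket{I}$, so that $J_U=\pproj{U}$, the ricochet relation $(U\otimes I)\kket{I}=(I\otimes U^T)\kket{I}$, together with $F\kket{I}=\kket{I}$ and $F(I\otimes U)F=U\otimes I$, gives $F\kket{U}=\kket{U^T}$ and hence
\begin{align}
J_{U^T}=F\,J_U\,F,
\end{align}
where here $F$ is the swap on the output copy of $\ical\otimes\ocal$. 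Thus the hypothesis on $C$ reads $\Tr_{\ical\ocal}(CJ_U^t)=p\,FJ_UF$ for every unitary $U$.

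Assume first $p>0$ and set $C':=(I_{\ical\ocal}\otimes F)\,C\,(I_{\ical\ocal}\otimes F)$, with $F$ acting on the output space only. Conjugation by the unitary $I_{\ical\ocal}\otimes F$ preserves positivity, so $C'\ge 0$; and since $F$ is supported entirely on the output space, it commutes both with the partial trace $\Tr_{\ical\ocal}$ and with any operator $J_\Lambda^t$ supported on $\ical\otimes\ocal$, so that
\begin{align}
\Tr_{\ical\ocal}(C'J_U^t)=F\big[\Tr_{\ical\ocal}(CJ_U^t)\big]F=F\,(p\,FJ_UF)\,F=p\,J_U .
\end{align}
Hence $\tilde C:=C'/p\ge 0$ satisfies $\Tr_{\ical\ocal}(\tilde CJ_U^t)=\Tr_{\ical\ocal}(C_0J_U^t)$ for all unitary $U$, with $C_0$ the Choi operator of the identity supermap of Eq.~\eqref{eq:def_id_supermap}, so Lemma~\ref{lem:id_unique} yields $\tilde C=C_0$, i.e. $C'=p\,C_0$ and $C=p\,(I_{\ical\ocal}\otimes F)\,C_0\,(I_{\ical\ocal}\otimes F)$. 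Running the same commutation argument with an arbitrary operation $\Lambda$ in place of $U$ then gives
\begin{align}
\Tr_{\ical\ocal}(CJ_\Lambda^t)=p\,F\big[\Tr_{\ical\ocal}(C_0J_\Lambda^t)\big]F=p\,F\,J_\Lambda\,F,
\end{align}
using that $C_0$ implements the identity supermap. For the degenerate case $p=0$ the hypothesis forces $\Tr\big[C(J_U^t\otimes I)\big]=0$ for every $U$, and since $C\ge 0$ and the supports of the $J_U$ span $\ical\otimes\ocal$, this already gives $C=0$, so the claimed formula holds trivially.

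I do not expect a real obstacle here once Lemma~\ref{lem:id_unique} is in hand: the corollary is essentially a repackaging of that lemma through the swap. The only points that need care are the verification of $J_{U^T}=FJ_UF$ (a one-line vectorization computation) and the bookkeeping that conjugating $C$ by $I_{\ical\ocal}\otimes F$ both preserves positivity and commutes with the contraction $\Tr_{\ical\ocal}(C\,\cdot^{\,t})$ — which it does precisely because $F$ acts only on the output spaces. All the substantive work, namely extracting the full operator from its action on unitaries via positivity, is carried out inside Lemma~\ref{lem:id_unique}.
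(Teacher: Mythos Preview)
The proposal is correct and follows essentially the same route as the paper: reduce to Lemma~\ref{lem:id_unique} by conjugating the Choi operator with the swap $F$, using the identity $F J_U F = J_{U^T}$. The only cosmetic differences are that you apply $F$ on the output factor $\pcal\otimes\fcal$ while the paper applies it on the input factor $\ical\otimes\ocal$, and you treat the scaling by $p$ (including the degenerate case $p=0$) more explicitly than the paper does.
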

\begin{proof}
A possible Choi operator for unitary transposition is given by $C'_0 = F_{\ical \ocal} C_0 F_{\ical \ocal}^\dag$, where $C_0$ is the operator defined in Eq.~\eqref{eq:def_id_supermap}. By assumption,
\begin{align}
\Tr_{\ical \ocal} [C' J_{U}^t] = \Tr_{\ical \ocal}  [C'_0 J_{U}^t].
\end{align}
In this case, by introducing $C = (F_{\ical \ocal})^\dag C' (F_{\ical \ocal})$,
the l.h.s. can be written as
\begin{align}
\Tr_{\ical \ocal}  [C' J_{U}^t]
&= \Tr_{\ical \ocal} [(F_{\ical \ocal}) C (F_{\ical \ocal})^\dag J_U^t] \notag \\
&= \Tr_{\ical \ocal} [ C (F_{\ical \ocal})^\dag J_U^t (F_{\ical \ocal})] \notag \\
&= \Tr_{\ical \ocal} [ C (J_{U^t})^t]
\end{align}
whereas the r.h.s. can be written as
\begin{align}
\Tr_{\ical \ocal} [C'_0 J_{U}^t] = \Tr_{\ical \ocal} [ C (J_{U^t})^t].
\end{align}
Since $U^t$ is also a unitary operator and spans the whole linear span of $\mathrm{span}\{ J_U \}$,
the uniqueness of $C'$ is equivalent to the uniqueness of the identity supermap $C$.
\end{proof}
After finishing a first version of this manuscript, we became aware of Ref.~\cite{chiribella20}, which presents an independent and considerably different proof that the unitary transposition is uniquely defined by the action of on unitary CPTP maps.

\begin{rem}
In general, for a given action of a supermap on unitary CPTP maps,
an extension of the domain of the supermap is not necessarily unique under the linearity and the completely CP preserving.
Thus, it is natural to consider what extra condition may lead to the uniqueness.
The results of Corollary~\ref{coll:single_unique} shows  the Choi operator of the supermap $\tmap{U} \mapsto \tmap{B} \circ \tmap{U} \circ \tmap{A}$ is a rank-1 operator.
One could imagine that, when there exists a rank-1 positive semidefinite extension, that operator is the only possible extension.
Unfortunately, this hypothesis is not enough for the uniqueness of the extension, as we will now show.

Consider a supermap transforming all two-dimensional unitary CPTP maps
to the two-dimensional identity channel, with some probability.
That is, there exists a Choi operator $C_0$ such that $\Tr_{\ical \ocal} [ C_0 J_{U}^t ] =q_U J_I$, where $q_U$ is a positive coefficient.
For $0 \leq p \leq 1$, we define the Choi operator  $C_p \in \mathcal{L} \big(\ical \otimes \ocal \otimes \pcal \otimes \fcal\big)$
\begin{align}
C_p:=
\begin{pmatrix}
1 & p &  p & 1 \\
p & 1 & 1 & p \\
p & 1 & 1 & p \\
1 & p &  p & 1
\end{pmatrix}
\otimes \phi^+,
\end{align}
where $\phi^+$ denotes the maximally entangled state, which is proportional to the Choi operator of the identity channel.
We can check that, for any $0 \leq p \leq 1$, the positivity constraint $C_p\geq0$ is respected and $\Tr_{\ical \ocal} [C_p J_{U}^t] = q_U J_I$, that is, $C_p J_{U}^t $ is proportional to $J_I$, but the proportionality coefficient $q_U$ is independent of $p$ but may depend on the unitary $U$. However, when $p=1$, $C_p$ is a rank-1 operator. Hence, imposing the existence of a supermap with rank-1 Choi operator is not sufficient to uniquely extend the action of such a supermap on non-unitary CPTP maps.
\end{rem}

\section{Conclusion and Future Scope}\label{sec:conclusion}

In this paper, we have addressed whether the action of a supermap is uniquely determined by its action on unitary CPTP maps.
We proved that, different from general supermaps, the action of the quantum switch on unitary CPTP maps uniquely defines its action on general CP maps under linearity and completely CP preserving.
We have also shown this uniqueness property holds for different classes of supermaps, which include identity supermaps, unitary transpositions~\cite{unitary_inversion2}, and the qubit unitary complex conjugation~\cite{unitary_conjugate}.

In order to tackle more general supermaps, it would be useful to find a simple criterion to identify which supermaps are uniquely defined by their action on unitary CPTP maps. Here particular cases of interest would be to analyze the $n$-slot quantum switch~\cite{qs_discrimination2}, and quantum processes which may be viewed as generalizations of the switch, such as the ones which can be described by a coherent control of causal order~\cite{indefinite_reversible,barrett21,wechs2021quantum}. 
Also, the quantum switch has the property of being a pure superchannel~\cite{indefinite_purification}, that is, it maps unitary operations (even unitary operations extended to larger systems) into unitary operations, and preserve the reversibility of quantum operations~\cite{indefinite_reversible,barrett21}. It would be interesting to have a better understanding of the relationship between pure superchannels and uniqueness property, as in the case of quantum switch. For instance, is it the case that all pure superchannels are  uniquely defined among pure superchannels by this action on (non-extended) unitary operations? We stress however that in this work, we do not assume that the switch is a pure superchannel, the only properties that we assume are its action on (non-extended) unitary operations, linearity, and complete positiveness. From these hypotheses, we prove that this has to be a pure superchannel, and, additionally, that it is unique.

Another interesting future direction would be to understand how the constraints of normalization would limit the possible extensions of supermaps. In this work we have obtained our results simply by arguments based on completely CP preserving, but the normalization constraints for deterministic transformations or linear constraints arising from causal relations~\cite{comb3} may considerably restrict the possibilities of extending the action of supermaps to non-unitary CPTP maps. We note that, when extending lower-order operations from a restricted set of quantum states, example 5 of Ref.~\cite{extending} shows that requiring the transformation to be deterministic may be a severe constraint. We then expect that stronger claims may be proven by imposing that a supermap correspond to a deterministic transformation.

As we show here, it is possible to strongly restrict the form of a supermap given only its action on unitary CPTP maps, and such a technique may reduce the difficulty of the analysis of the performance of higher-order quantum operations, since it is enough to perform an optimization on a restricted set of CP maps.
We hope that questions, answers, and methods covered in this paper contribute to our understanding of how positivity conditions restrict higher-order quantum operations and that this work paves the way for understanding the extension of general supermaps.

\section*{Acknowledgments}

We acknowledge Simon Milz for comments on the first version of this manuscript.
This work was supported by MEXT Quantum Leap Flagship Program (MEXT QLEAP) Grant No.~JPMXS0118069605, JPMXS0120351339,
Japan Society for the Promotion of Science (JSPS) KAKENHI Grant No.~17H01694, 18H04286, 18K13467, 21H03394
and Advanced Leading Graduate Course for Photon Science (ALPS).
M.~T.~Quintino acknowledges the Austrian Science Fund (FWF) through the SFB project ``BeyondC'' (sub-project F7103),
a grant from the Foundational Questions Institute (FQXi) Fund and a grant from the John Templeton Foundation (sub-project F7103)
as part of the Quantum Information Structure of Spacetime (QISS) Project (qiss.fr).
The opinions expressed in this publication are those of the authors and do not necessarily reflect the views of the John Templeton Foundation.
This project has received funding from the European Union’s Horizon 2020 research and innovation programme under the Marie Sk\l odowska-Curie grant agreement No 801110 and the Austrian Federal Ministry of Education, Science and Research (BMBWF).
It reflects only the authors' view, the EU Agency is not responsible for any use that may be made of the information it contains. A. Soeda acknowledges support by JST Moonshot R\&D Grant Number JPMJMS226C.

\nocite{apsrev42Control}
\bibliographystyle{0_MTQ_apsrev4-2_corrected}

\bibliography{refs}

\clearpage

\onecolumngrid
\appendix

\section{The Elements in the Span of the Choi Operator of Unitary Operations}\label{ap:span}

\begin{lem} \label{lem:span1}
Let $\mathcal{I}$ and $\mathcal{O}$ be linear spaces isomorphic to $\mathbb{C}^d$, $U:\mathcal{I}\to\mathcal{O}$ be an arbitrary unitary operator and ${J_U:=\sum_{ij} \ketbra{i}{j}\otimes U\ketbra{i}{j}U^\dagger}$ its associated Choi operator as a unitary CPTP map.
A linear operator $C\in \mathcal{L}(\mathcal{I}\otimes\mathcal{O})$ belongs to the linear space spanned by the Choi operators $J_U$ of unitary CPTP maps if and only if
	\begin{align} \label{Eq:unital}
		\Tr_{\mathcal{I}}[C] = \Tr[C] \, \frac{I_d}{d}~\quad and~ \quad
    \Tr_{\mathcal{O}}[C] = \Tr[C] \, \frac{I_d}{d}.
	\end{align}
\end{lem}
\begin{proof}
	By definition, if $C$ belongs to the linear span of $J_U$, there exist real coefficients $\{c_i\}_i$ and a set of unitary operators $U_i$ such that
\begin{equation}
	C=\sum_i c_i J_{U_i}.
\end{equation}
	Since for any unitary operator $U$ we have that $\Tr_{\mathcal{I}}[J_U]=\Tr_{\mathcal{O}}[J_U]= I_d$, it follows that
\begin{align}
	\Tr[C]&=\sum_i c_i \Tr[J_{U_i}]=d \sum_i c_i,  \\
	\Tr_{\mathcal{I}}[C]&= \left(\sum_i c_i \right) I_d, \\
	\Tr_{\mathcal{O}}[C]&= \left(\sum_i c_i \right) I_d,
\end{align}
hence Eqs.~\eqref{Eq:unital} hold.

	For the converse, we start by pointing out that every self adjoint operator $C\in \mathcal{L}(\mathcal{I}\otimes\mathcal{O})$ can be written as
\begin{equation} \label{eq:HSbasis}
	C=\gamma_0 \,\frac{I_d}{d}\otimes \frac{I_d}{d}
	+ \sum_{i=1}^{d^2-1} \,\alpha_i \sigma_i \otimes \frac{I_d}{d}
	+ \sum_{o=1}^{d^2-1 }\, \beta_o  \frac{I_d}{d} \otimes \sigma_o
	+ \sum_{i=1}^{d^2-1}\sum_{o=1}^{d^2-1} \gamma_{io} \,\sigma_i \otimes \sigma_o,
\end{equation}
where $\{\sigma_i\}_{i=1}^{d^2-1}$ is a set of self adjoint operators on $\mathbb{C}_d$ respecting $\Tr[\sigma_i]=0$ and $\Tr[\sigma_i\,\sigma_j]=\delta_{ij}$. An explicit basis may be obtained via the generalized Gell-Mann matrices~\cite{bertlmann08}. The orthogonality and traceless properties of operators $\sigma_i$ ensure that
\begin{align} \label{Eq:HS1}
	\gamma_0 &= \Tr[C],  \hspace*{20mm} \gamma_{io} = \Tr\Big[C\, (\sigma_i\otimes \sigma_o) \Big], \\ \label{Eq:HS2}
	\alpha_i &= \Tr\Big[ C\, (\sigma_i\otimes I_d) \Big], \quad \quad
	\beta_o = \Tr\Big[ C\, (I_d \otimes \sigma_o) \Big].
\end{align}

The linear span of unital channels is identical to the linear span of unitary CPTP maps \cite{mendl09} (see Ref.~\cite{chiribella20} for a constructive proof).
The Choi operators $C$ of unital CP maps are characterized by $\Tr_{\mathcal{I}}[C]= \Tr[C] \, \frac{I_d}{d}$ and $\Tr_{\mathcal{O}}[C] = \Tr[C] \, \frac{I_d}{d}$, which implies that $\alpha_i=\beta_o=0$ for every $i$ and $o$.
Therefore, an arbitrary linear operator $C$ belongs to the linear span of $J_U$ if it can be written as
\begin{equation}
	C=\gamma_0 \frac{I_d}{d}\otimes \frac{I_d}{d}
	+ \sum_{i=1}^{d^2-1}\sum_{o=1}^{d^2-1} \gamma_{io}\, \sigma_i \otimes \sigma_o.
\end{equation}
\end{proof}

\section{The proof of Lemma~\ref{lem:id_unique}}\label{ap:proof_id}

Consider the identity supermap $\ttsmap{C}$ that transforms an input unitary operation
$\tmap{U}$ on $\ical \otimes \ocal \cong \mathbb{C}^d \otimes \mathbb{C}^d$ to
$\tmap{U}$ on $\pcal \otimes \fcal \cong \mathbb{C}^d \otimes \mathbb{C}^d$.
We show that this supermap transforms an arbitrary input operation $\tmapf{\Lambda}$ on $\ical \otimes \ocal$ to itself on $\pcal \otimes \fcal$
by assuming the completely CP preserving property.
That is, the Choi operator $C$ of the identity supermap satisfies
\begin{align}
C = \sum_{i,j,k,l=1}^d \ketbra{ij}{kl} \otimes \tmap{C}(\ketbra{ij}{kl}) \in \lcal( \ical \otimes \ocal \otimes \pcal \otimes \fcal),
\end{align}
where $\tmap{C}$ denotes the corresponding map of $C$, i.e., $\tmap{C}(J) = \Tr_{\ical \ocal} [C J^t]$ for $J \in \lcal (\ical \otimes \ocal)$,
and we show that this Choi operator is uniquely determined as
\begin{align}
C_0 = \sum_{i,j,k,l=1}^d \ketbra{ij}{kl} \otimes \ketbra{ij}{kl} \in \lcal( \ical \otimes \ocal \otimes \pcal \otimes \fcal)
\end{align}
under the following assumptions:
its action on unitary operations is determined as $\Tr_{\ical \ocal} [ C J_U^t ] = \Tr_{\ical \ocal} [ C_0 J_U^t ]$ for an arbitrary unitary operation $U$;
and it is a positive operator $C \geq 0$.
Note that the action of $\tmap{C}$ is given by $\tmap{C}(J_U) = \Tr_{\ical \ocal} [C J_U^t] = \Tr_{\ical \ocal} [ C_0 J_U^t ] = J_U$ by assumption.
We first consider the diagonal elements of $C$, then consider the off-diagonal elements of $C$.

(First part: diagonal elements of $C$)\quad
Since $\sum_{i,j=1}^d \proj{ij} = I \in \mathrm{span}\{J_U\}$, the action of $\tmap{C}$ is determined as $\tmap{C}(I) = I$, and we obtain
\begin{align}
\sum_{i,j=1}^d \proj{ij} = \tmap{C} (\sum_{i,j=1}^d \proj{ij}) &= \sum_{i,j=1}^d \tmap{C} (\proj{ij}).
\end{align}
Thus, for $\ket{ijkl} \in \ical \otimes \ocal \otimes \pcal \otimes \fcal$
\begin{align}
\sum_{i,j=1}^d \bra{ijkl} C \ket{ijkl} &= \sum_{i,j=1}^d \bra{kl}  \Tr_{\mathcal{IO}}\, (C \proj{ij}^t)\ket{kl} = \sum_{i,j=1}^d \bra{kl} \tmap{C} (\proj{ij}) \ket{kl} \notag \\
&= \bra{kl} \sum_{i,j=1}^d \tmap{C} (\proj{ij}) \ket{kl} = \bra{kl} \big( \sum_{i,j=1}^d \proj{ij}  \big) \ket{kl} = 1, \label{sum_equal_1}
\end{align}
holds for all $k,l$, and we obtain
\begin{align}
\Tr\, [C] = \sum_{i,j,k,l=1}^d \bra{ijkl} C \ket{ijkl} &= d^2 \label{sum_equal_d2}.
\end{align}
Since every diagonal element of a positive semidefinite operator is non-negative,
Eq.~\eqref{sum_equal_1} shows that every diagonal element of $C$, say $a$, satisfies $0 \leq a \leq 1$,
and Eq.~\eqref{sum_equal_d2} shows that the sum of all diagonal elements of $C$ is equal to $d^2$.

Since $\ketbra{ij}{ji},\ketbra{ii}{jj} \in \mathrm{span} \{ J_U \}$ holds for all $ i \neq j$ from Lemma~\ref{lem:span1}, the action of $\tmap{C}$ is determined as
\begin{align}
\tmap{C} ( \ketbra{ij}{ji} ) = \ketbra{ij}{ji}, \quad \tmap{C} ( \ketbra{ii}{jj} ) = \ketbra{ii}{jj},
\end{align}
and the Choi operator $C$ satisfies
\begin{align}
\bra{ijij} C \ket{jiji} = 1, \quad \bra{iiii} C \ket{jjjj} = 1. \label{off_diagonal_of_choi}
\end{align}

Since $C \geq 0$ all principal minors of $C$ are non-negative.
Consider the following two-dimensional submatrices of $C$
\begin{align}
\begin{pmatrix}
\bra{ijij} C \ket{ijij} & \bra{ijij} C \ket{jiji} \\
\bra{jiji} C \ket{ijij} & \bra{jiji} C \ket{jiji}
\end{pmatrix},
\quad
\begin{pmatrix}
\bra{iiii} C \ket{iiii} & \bra{iiii} C \ket{jjjj} \\
\bra{jjjj} C \ket{iiii} & \bra{jjjj} C \ket{jjjj}
\end{pmatrix},
\end{align}
with $i\neq j$.
Eq.~\eqref{off_diagonal_of_choi} indicates that all the off-diagonal elements of these matrices are 1,
and the diagonal elements are between 0 and 1 due to Eq.~\eqref{sum_equal_1}.
Then these matrices are positive semi-definite only if
\begin{align}
\bra{ijij} C \ket{ijij} = 1, \quad \bra{iiii} C \ket{iiii}=1 \label{diagonal_of_choi}
\end{align}
for all $i \neq j$.

Since the sum of all diagonal elements is equal to $d^2$ due to Eq.~\eqref{sum_equal_d2},
and Eq.~\eqref{diagonal_of_choi} provides the $ d(d-1) + d =d^2$ elements that are equal to 1,
the other diagonal elements of $C$ are equal to 0.
For positive semi-definite matrices, if a diagonal element is 0, all elements in that row or column are also 0.
Thus, we have
\begin{align}
\bra{i_1 j_1 k_1 l_1} C \ket{i_2 j_2 k_2 l_2} \neq 0
\end{align}
only if $(i_1,j_1,k_1,l_1) = (i,j,i,j)$ and $(i_2,j_2,k_2,l_2) = (k,l,k,l)$ for some $i,j,k,l$.

(Second part: off-diagonal elements of $W$)\quad
The second part of the proof is to show that all the elements which are non-zero are equal to 1.
Consider a Choi operator $J \in \mathrm{span}\{J_U\}$ with $\bra{ij} J \ket{kl} \neq 0$ for all $i\neq j, k\neq l$,
e.g., the Choi operator of the Fourier transform.
Since $J \in \mathrm{span}\{J_U\}$, the equality
\begin{align}
\bra{ij} \tmap{C}(J) \ket{kl} &= \bra{ij} J \ket{kl} \label{eq:cijkl_1}
\end{align}
holds for all $i\neq j, k\neq l$.
On the other hand, the action determined by the Choi operator $C$ is given by
\begin{align}
\bra{ij} \tmap{C}(J) \ket{kl} &= \bra{ij} \Tr_{\ical \ocal} [C (J^t_{\ical \ocal} \otimes I_{\pcal\fcal})] \ket{kl} \\
&=  \Tr [ \bra{ij} C \ket{kl} J^t ]\\
&= \sum_{i',j',k',l'} \bra{i'j'ij} C \ket{k'l'kl} \bra{k'l'} J^t \ket{i'j'} \\
&= \bra{ijij} C \ket{klkl} \bra{kl} J^t \ket{ij} \\
&= \bra{ijij} C \ket{klkl} \bra{ij} J \ket{kl}, \label{eq:cijkl_2}
\end{align}
where the fourth equality holds due to Eq.~\eqref{diagonal_of_choi}.
Therefore, by considering the condition that both Eq.~\eqref{eq:cijkl_1} and Eq.~\eqref{eq:cijkl_2} hold, we obtain $\bra{ijij} C \ket{klkl} = 1$ for all $i\neq j, k\neq l$.
Similarly, consider a Choi operator $J \in \mathrm{span}\{J_U\}$ with $\bra{ii} J \ket{kl} \neq 0$ for all $i$ and $k\neq l$, e.g., the Choi operator of the Fourier transform,
we obtain $\bra{iiii} C \ket{klkl} = 1$ for all $i$ and $k\neq l$.
Similar calculations show that
$\bra{ijji} C \ket{kkkk} = 1$ for $i \neq j$ and all $k$,
and $\bra{iiii} C \ket{kkkk} = 1$ for all $i$ and $k$.
Thus, we uniquely determine all the elements of $C$ and obtain $C = C_0$.

\section{The Proof of Lemma~\ref{lem:qs_unique}}\label{ap:proof_qs}

We start the section by recalling that
\begin{equation}
\displaystyle{W_0:= \sum_{i,j,k=0}^{d-1} ( \ket{ijjkik00} + \ket{jkijik11} )( \bra{ijjkik00} + \bra{jkijik11} )},
\end{equation}
where the Hilbert spaces are ordered as $\ical_1 \otimes \ocal_1 \otimes \ical_2 \otimes \ocal_2 \otimes \pcal_T  \otimes \fcal_T \otimes \pcal_C \otimes \fcal_C$. Hence, for any $J = \ketbra{ijkl}{i'j'k'l'} \in \lcal( \ical_1 \otimes \ocal_1 \otimes \ical_2 \otimes \ocal_2)$, we have that
\begin{align}
\Tr_{in} [W_0 J^t] =& \bra{ijkl} W_0 \ket{i'j'k'l'}  \\
=& \delta_{jk}\delta_{j'k'} \ketbra{il00}{i'l'00} + \delta_{il}\delta_{i'l'} \ketbra{kj11}{k'j'11}
 + \delta_{jk}\delta_{i'l'} \ketbra{il00}{k'j'11} + \delta_{il}\delta_{j'k'} \ketbra{kj11}{i'l'00}. \label{eq:qs_action_cbasis}
\end{align}

We now remark that the matrix elements of $W_0$ are 0 or 1.
In order to prove Theorem~\ref{thm:qs_unique}, it is enough to prove that each element of $W$ for any quantum-switch-like supermap satisfies $| (W)_{ij} | = 0 \mbox{ or } 1$.
This is because if there exists a quantum-switch-like supermap $W$ (or its Choi operator, to be exact) such that satisfies $| (W)_{ij} | = 0 \mbox{ or } 1$ but $W \neq W_0$,
then $W_\epsilon := \epsilon W + (1-\epsilon) W_0$ is also a candidate for a quantum-switch-like supermap and must have a value satisfying $0 < | (W_{\epsilon})_{ij} | < 1$,
which contradicts with the assertion that $W$ for any quantum switch-like supermap satisfies $| (W)_{ij} | = 0 \mbox{ or } 1$.

The rest of the proof is divided into two parts:
the first part considers the diagonal elements of $W$, and the second part considers the off-diagonal elements of $W$.

(First part: diagonal elements of $W$)\quad
Consider $J = \ketbra{ijkl}{i'j'k'l'}$.
For all $i,j,k,l$, there exist $i',j',k',l'$ such that $J \in \mathrm{span} \{ J_{U_1} \otimes J_{U_2} \}$.
This can be shown by noticing that if $i=j$ we can set $i'=j'\neq i$ to ensure that $\ketbra{ij}{i'j'} \in \mathrm{span} \{ J_U \}$, and in this case, we relabel the variables $i',j'$ as $j$.
Similarly, if $i \neq j$ we set $i'=j$ and $j'=i$ so that $\ketbra{ij}{i'j'} \in \mathrm{span} \{ J_U \}$ (see Sec.~\ref{sec:span} or Lemma~\ref{lem:span1}).
The same arguments hold for $k,l$.  Hence, for $ i \neq j$ and $k \neq l$, 
\begin{align}
&\Tr_{in} [W (\ketbra{ijkl}{jilk})^t]
= ( \delta_{jk} \ket{il00} + \delta_{il} \ket{kj11} ) ( \delta_{il} \bra{jk00} + \delta_{jk} \bra{li11}  ) \label{eq:unique_w_action1} \\
&\Tr_{in} [W (\ketbra{ijkk}{jill})^t] 
= ( \delta_{jk} \ket{ik00} + \delta_{ik} \ket{kj11} ) ( \delta_{il} \bra{jl00} + \delta_{jl} \bra{li11}  ) \label{eq:unique_w_action2} \\
&\Tr_{in} [W (\ketbra{iikl}{jjlk})^t]
= ( \delta_{ik} \ket{il00} + \delta_{il} \ket{ki11} ) ( \delta_{jl} \bra{jk00} + \delta_{jk} \bra{lj11}  ) \label{eq:unique_w_action3} \\
&\Tr_{in} [W (\ketbra{iikk}{jjll})^t]
= ( \delta_{ik} \ket{ik00} + \delta_{ik} \ket{ki11} ) ( \delta_{jl} \bra{jl00} + \delta_{jl} \bra{lj11}  ), \label{eq:unique_w_action4}
\end{align}
Also, consider that $I \otimes I \in \mathrm{span} \{ J_{U_1} \otimes J_{U_2} \}$,
we obtain
\begin{gather}
\Tr [\Tr_{in} [W (I \otimes I)^t]] = \Tr [W] = 2 d^3. \label{eq:unique_w_action5}
\end{gather}

In Appendix~\ref{ap:diag}, we show how Eqs.~\eqref{eq:unique_w_action1}~--~\eqref{eq:unique_w_action5} and the positivity $W \geq 0$ implies that, for  $ i \neq j, k \neq l$, the operator $W$ has to respect
\begin{align}
\bra{ijklil00} W \ket{ijklil00} = 1 &\text{ for } j = k \label{eq:unique_w_diag_result_1} \\
\bra{ijklkj11} W \ket{ijklkj11} = 1 &\text{ for } i = l \\
\bra{ijkkik00} W \ket{ijkkik00} = 1 &\text{ for } j = k \\
\bra{ijkkkj11} W \ket{ijkkkj11} = 1 &\text{ for } i = k \\
\bra{iiklil00} W \ket{iiklil00} = 1 &\text{ for } i = k \\
\bra{iiklki11} W \ket{iiklki11} = 1 &\text{ for } i = l \\
\bra{iikkik00} W \ket{iikkik00} = 1 &\text{ for } i = k \\
\bra{iikkki11} W \ket{iikkki11} = 1 &\text{ for } i = k, \label{eq:unique_w_diag_result_8}
\end{align}
or equivalently,
\begin{align}
\bra{ijjlil00} W \ket{ijjlil00} &= 1 \text{ for } i \neq j, j \neq l \\
\bra{ijkikj11} W \ket{ijkikj11} &= 1 \text{ for } i \neq j, k \neq i \\
\bra{ijjjij00} W \ket{ijjjij00} &= 1 \text{ for } i \neq j \\
\bra{ijiiij11} W \ket{ijiiij11} &= 1 \text{ for } i \neq j \\
\bra{iiilil00} W \ket{iiilil00} &= 1 \text{ for } i \neq l \\
\bra{iikiki11} W \ket{iikiki11} &= 1 \text{ for } i \neq k \\
\bra{iiiiii00} W \ket{iiiiii00} &= 1  \\
\bra{iiiiii11} W \ket{iiiiii11} &= 1 .
\end{align}

The number of terms in 1st, 3rd, 5th, and 7th lines (or 2nd, 4th, 6th, 8th lines) is given by
$d(d-1)^2, d(d-1), d(d-1), d $ respectively, and the total number of terms in these 8 lines is $2 d^3$.
Since $\Tr\, [W] = 2 d^3$, all the other diagonal terms of $W$ are 0.

\begin{rem}
Here we illustrate the key idea of the proof in Appendix~\ref{ap:diag} with a simpler example.
Consider a linear map $\tmap{W} : \lcal(\mathbb{C}^2) \to \lcal(\mathbb{C}^2)$ and the corresponding Choi operator $W \in \lcal ( \mathbb{C}^4 )$ satisfying,
\begin{align}
\Tr_{in} [W (\ketbra{i}{j})^t] &= \ketbra{i}{j} \quad \mbox{for} \quad (i,j)=(0,1) \text{ and }\; (i,j)=(1,0) \\
\Tr\, [W] &= 2 \\
W &\geq 0,
\end{align}
then the matrix $W$ has the form of
\begin{align}
W =
\begin{pmatrix}
a & b & 0 & 1 \\
c & d & 0 & 0 \\
0 & 0 & e & f \\
1 & 0 & g & h \\
\end{pmatrix}
\end{align}
with $a+d+e+h = 2$.
From the positivity $W \geq 0$, the diagonal elements have to respect $a,d,e,h \geq 0$, and since the determinant of all its minors has to be non-negative, we obtain $ah-1 \geq 0$. By combining these constraints we can verify that,
\begin{align}
	2 &= a+d+e+h \\
	  & \geq 2 \sqrt{ah} + d + e, \quad &&\left[ \text{since } a+h\geq 2\sqrt{ah}   \right] \\
	  & \geq 2+ e + d, \quad &&\left[ \text{since } ah\geq 1  \right] \\
	  & \geq 2. \quad &&\left[ \text{since } e\geq0, \; d\geq0  \right]
\end{align}
Hence, we obtain $ a+h = 2\sqrt{ah}$ as a necessary condition.
Thus, we obtain $a=h=1$ and the diagonal terms are uniquely determined as $(a,d,e,h) = (1,0,0,1)$.
\end{rem}

(Second part: off-diagonal elements of $W$)\quad
In this part, we show that every element of $W$ satisfies $| (W)_{ij} | = 0 \mbox{ or } 1$.
From the positivity, if a diagonal element is 0, then all the elements of that row or column are 0.
Therefore, by analysing Eq.~\eqref{eq:qs_action_cbasis} and Lemma \ref{lem:norm_equality} below, we see that the action of $W$ can be described by
\begin{align}
\bra{ijkl} W \ket{i'j'k'l'}
&=\quad a_{ijkli'j'k'l'} \delta_{jk}\delta_{j'k'} \ketbra{il00}{i'l'00} \notag \\
&\quad+ b_{ijkli'j'k'l'} \delta_{il}\delta_{i'l'} \ketbra{kj11}{k'j'11} \notag \\
&\quad + c_{ijkli'j'k'l'} \delta_{jk}\delta_{i'l'} \ketbra{il00}{k'j'11} \notag \\
&\quad+ d_{ijkli'j'k'l'} \delta_{il}\delta_{j'k'} \ketbra{kj11}{i'l'00} \label{eq:qs_action_abcd}
\end{align}
with $|a_{ijkli'j'k'l'}|, |b_{ijkli'j'k'l'}|, |c_{ijkli'j'k'l'}|, |d_{ijkli'j'k'l'}| \leq 1$.
As we mentioned, it is enough to show that all the absolute values of them (with the non-vanishing Kronecker deltas) are 1.

In order to show that the coefficients satisfies $|a_{ijkli'j'k'l'}|, |b_{ijkli'j'k'l'}|, |c_{ijkli'j'k'l'}|, |d_{ijkli'j'k'l'}| = 1$ for all $i,j,k,l,i',j',k',l'$ (for those that yield the non-vanishing Kronecker deltas in Eq.~\eqref{eq:qs_action_abcd}), it will be helpful to define a class of linear operators belonging to the $\mathrm{span} \{ J_{U_1} \otimes J_{U_2} \}$ satisfying useful properties. Before introducing this class, we point out that since the operators $\ketbra{ijkl}{i'j'k'l'}$ form a basis, any operator $J$ acting in $\mathcal{I}_1\otimes\mathcal{O}_1\otimes\mathcal{I}_2\otimes\mathcal{O}_2$ can be written as
\begin{equation}
J = \sum_{ijkli'j'k'l'} s_{ijkli'j'k'l} \ketbra{ijkl}{i'j'k'l'},
\end{equation}
for some scalar coefficients $s_{ijkli'j'k'l}$. It will be useful to focus on the case where we set scalar coefficients to be $s_{ijkli'j'k'l}\in\{-1,0,1\}$ and $J$ is guaranteed to be in the linear span $\mathrm{span} \{ J_{U_1} \otimes J_{U_2} \}$.
\begin{lem}\label{lem:norm_equality}
Let $a,b,c,d,a',b',c',d'$ be integers in $\{0,\ldots,d-1\}$.
For any given operator $\ketbra{abcd}{a'b'c'd'}$, there exists an operator $J_{s} \in \mathrm{span} \{ J_{U_1} \otimes J_{U_2} \} $ such that
\begin{align}
J_{s} := \sum_{ijkli'j'k'l'} s_{ijkli'j'k'l} \ketbra{ijkl}{i'j'k'l'},  \label{eq:def_j_gs}
\end{align}
with $s_{ijkli'j'k'l} \in \mathbb{C}$ and $s_{abcda'b'c'd'} \neq 0$.

Additionally, there exist $J_s$ such that further satisfy
\begin{align}
\| \Tr_{in} [W J_s^t] \|_1 &= \sum_{ijkli'j'k'l'}  \| s_{ijkli'j'k'l} \bra{ijkl} W_0 \ket{i'j'k'l'} \|_1 \label{eq:norm_equality_ijkl}.
\end{align}
\end{lem}
The proof of Lemma~\ref{lem:norm_equality} is presented in App.~\ref{ap:offdiag}.

Using the parametrization for $W$ introduced in Eq.~\eqref{eq:qs_action_abcd} and the orthogonality of the control qubits, any operator $J_{s}$ respects
\small
\begin{align}
\| \Tr_{in} (W J_{s}^t) \|_1
&= \| \sum_{ijkli'j'k'l'} s_{ijkli'j'k'l}  \bra{ijkl} W \ket{i'j'k'l'}  \|_1 \\
&\leq \| \sum_{ijkli'j'k'l'} s_{ijkli'j'k'l} \; a_{ijkli'j'k'l'} \delta_{jk}\delta_{j'k'} \ketbra{il00}{i'l'00} \|_1 \notag \\
&\quad + \| \sum_{ijkli'j'k'l'} s_{ijkli'j'k'l}\; b_{ijkli'j'k'l'} \delta_{il}\delta_{i'l'} \ketbra{kj11}{k'j'11} \|_1 \notag \\
&\quad + \| \sum_{ijkli'j'k'l'} s_{ijkli'j'k'l}\; c_{ijkli'j'k'l'} \delta_{jk}\delta_{i'l'} \ketbra{il00}{k'j'11} \|_1  \notag \\
&\quad + \| \sum_{ijkli'j'k'l'} s_{ijkli'j'k'l} \;d_{ijkli'j'k'l'} \delta_{il}\delta_{j'k'} \ketbra{kj11}{i'l'00} \|_1 \nonumber \\
&\leq  \sum_{ijkli'j'k'l'}  |s_{ijkli'j'k'l} | \| \delta_{jk}\delta_{j'k'} \ketbra{il00}{i'l'00} \|_1 +  \sum_{ijkli'j'k'l'}  |s_{ijkli'j'k'l} | \|  \delta_{il}\delta_{i'l'} \ketbra{kj11}{k'j'11} \|_1 \label{eq:one_norm_sign} \\
&\quad \quad + \sum_{ijkli'j'k'l'}   |s_{ijkli'j'k'l} | \| \delta_{jk}\delta_{i'l'} \ketbra{il00}{k'j'11} \|_1 \notag +  \sum_{ijkli'j'k'l'}  |s_{ijkli'j'k'l} | \|  \delta_{il}\delta_{j'k'} \ketbra{kj11}{i'l'00} \|_1,
\end{align}
\normalsize
where the last inequality follows from the relations $|a_{ijkli'j'k'l'}| \leq 1$, $|b_{ijkli'j'k'l'}| \leq 1$, $|c_{ijkli'j'k'l'}| \leq 1$, and $|d_{ijkli'j'k'l'}| \leq 1$.

Also, if the operator $J_{s}$ respects Eq.~\eqref{eq:norm_equality_ijkl} and belongs to $\mathrm{span} \{ J_{U_1} \otimes J_{U_2} \}$, Eq.~\eqref{eq:qs_action_cbasis} ensures that
\begin{align}
\| \Tr_{in} [W J_{s}^t] \|_1
&= \sum_{ijkli'j'k'l'}  \| s_{ijkli'j'k'l} \bra{ijkl}W_0 \ket{i'j'k'l'} \|_1\\
&= \sum_{ijkli'j'k'l'}  |s_{ijkli'j'k'l} | \Big(\| \delta_{jk}\delta_{j'k'} \ketbra{il00}{i'l'00} \|_1 + \delta_{il}\delta_{i'l'} \ketbra{kj11}{k'j'11} \|_1 \notag \\
&\qquad\qquad\qquad\qquad\qquad+ \delta_{jk}\delta_{i'l'} \ketbra{il00}{k'j'11} \|_1 +\delta_{il}\delta_{j'k'} \ketbra{kj11}{i'l'00} \|_1 \Big) \label{eq:one_norm_nosign} ,
\end{align}
quantity which is identical to the r.h.s. of the inequality in \eqref{eq:one_norm_sign}.

We will now show that the absolute value for any all  coefficients $a_{ijkli'j'k'l'}$ with non-vanishing $\delta_{jk}\delta_{j'k'}$ is exactly one.
For that, we pick a particular coefficient $a_{abcda'b'c'd'}$ with non-vanishing $\delta_{ab}\delta_{a'b'}$, assume that $|a_{abcda'b'c'd'}| < 1$, and exhibit a contradiction.
Let $J_{s}$ be an operator where  $s_{abcda'b'c'd'} = 1$ and satisfies the condition in Lemma~\ref{lem:norm_equality}.
Note that the original statement in Lemma~\ref{lem:norm_equality} is that $s_{abcda'b'c'd'} \neq 0$,
but here the operator $s_{abcda'b'c'd'}^{-1} J_s$ is also in the span $J_{s} \in \mathrm{span} \{ J_{U_1} \otimes J_{U_2} \}$ and satisfies $s_{abcda'b'c'd'} =1$.
Since $|a_{ijkli'j'k'l'}| < 1$ for all indices, it holds that
\begin{align}
\| \sum_{ijkli'j'k'l'} s_{ijkli'j'k'l} \; a_{ijkli'j'k'l'} \delta_{jk}\delta_{j'k'} \ketbra{il00}{i'l'00} \|_1
&\leq \sum_{ijkli'j'k'l'} \| s_{ijkli'j'k'l} \; a_{ijkli'j'k'l'} \delta_{jk}\delta_{j'k'} \ketbra{il00}{i'l'00} \|_1 \notag \\
&< \sum_{ijkli'j'k'l'}  |s_{ijkli'j'k'l} | \| \delta_{jk}\delta_{j'k'} \ketbra{il00}{i'l'00} \|_1,
\end{align}
where the second inequality is strict because
\begin{align}
\| s_{abcda'b'c'd'} \; a_{abcda'b'c'd'} \delta_{ab} \delta_{a'b'} \ketbra{ad00}{a'd'00} \|_1 < |s_{abcda'b'c'd'} | \| \delta_{ab} \delta_{a'b'} \ketbra{ad00}{a'd'00} \|_1
\end{align}
is strict.
Hence, inequality~\eqref{eq:one_norm_sign} is also strict, and by combining this inequality with Eq.~\eqref{eq:one_norm_nosign} we have
\begin{align}
\| \Tr_{in} [W J_{s}^t] \|_1
&< \sum_{ijkli'j'k'l'}  |s_{ijkli'j'k'l} | \Big(\| \delta_{jk}\delta_{j'k'} \ketbra{il00}{i'l'00} \|_1 + \delta_{il}\delta_{i'l'} \ketbra{kj11}{k'j'11} \|_1 \notag \\
&\qquad + \delta_{jk}\delta_{i'l'} \ketbra{il00}{k'j'11} \|_1 +\delta_{il}\delta_{j'k'} \ketbra{kj11}{i'l'00} \|_1 \Big) \nonumber \\
&= \| \Tr_{in}(W J_{s}^t) \|_1,
\end{align}
which is a logical contradiction.

Using the same method, we can also show that, for all indices $\{i,j,k,l,i',j',k',l'\}$, we have $|b_{ijkli'j'k'l'}| = 1$, $|c_{ijkli'j'k'l'}| = 1$, and $|d_{ijkli'j'k'l'}| = 1$ what finishes the proof.

\section{Evaluation of the Diagonal Terms in the Proof of Lemma~\ref{lem:qs_unique}}\label{ap:diag}

Here we provide the proof of Eqs.~\eqref{eq:unique_w_diag_result_1}~--~\eqref{eq:unique_w_diag_result_8}
from Eqs.~\eqref{eq:unique_w_action1}~--~\eqref{eq:unique_w_action5} and the positivity $W \geq 0$.
Let us define the disjoint sets,
\begin{gather}
S_1 = \{ \ket{ijklil00} \mid i \neq j, k \neq l, j = k \} \cup \{ \ket{ijklkj11} \mid i \neq j, k \neq l, i = l \} \\
S_2 = \{ \ket{ijkkik00} \mid i \neq j, j = k \} \\
S_3 = \{ \ket{ijkkkj11} \mid i \neq j, i = k \} \\
S_4 = \{ \ket{iiklil00} \mid k \neq l, i = k \} \\
S_5 = \{ \ket{iiklki11} \mid k \neq l, i = l \} \\
S_6 = \{ \ket{iikkik00} \mid i = k \} \\
S_7 = \{ \ket{iikkki11} \mid i = k \},
\end{gather}
and it is enough to prove that
\begin{align}
\prod_{\ket{\psi} \in S_a} \bra{\psi} W \ket{\psi} \geq 1 \label{eq:unique_s_ket_prod_0}
\end{align}
holds for $a=1,\ldots,7$.
This is because if Eq.~\eqref{eq:unique_s_ket_prod_0} holds, the inequality
\begin{align}
\Tr [W] = 2d^3 \geq \sum_{a=1}^7 \sum_{\ket{\psi} \in S_a} \bra{\psi} W \ket{\psi}
\geq \sum_{a=1}^7 |S_a| \left( \prod_{\ket{\psi} \in S_a} \bra{\psi} W \ket{\psi} \right)^{\frac{1}{|S_a|}}
\geq \sum_{a=1}^7 |S_a| = 2d^3
\end{align}
also holds, and by considering the condition for equality, we obtain
\begin{align}
\bra{\psi} W \ket{\psi} = 1
\end{align}
for all $\ket{\psi} \in S_a$ for all $a$.
Here we only prove Eq.~\eqref{eq:unique_s_ket_prod_0} for $a=1,2,6$.
The case of $a=4$ can be proved in the same way as the case of, $a=2$ by considering the symmetry between the two input operations.
The rest cases $a=3,5,7$ can be proved in the same way as $a=2,4,6$, respectively, by considering the symmetry between the action of the control qubit being 0 and 1.

(Case 1: $S_1$) \quad
In this case, we show the following inequality holds
\begin{align}
\prod_{\ket{\psi} \in S_1} \bra{\psi} W \ket{\psi}
&= \left( \prod_{ i \neq j, k \neq l, j = k} \bra{ijklil00} W \ket{ijklil00} \right) \left( \prod_{ i \neq j, k \neq l, i = l} \bra{ijklkj11} W \ket{ijklkj11} \right) \notag \\
&= \left( \prod_{ i \neq k, k \neq l} \bra{ikklil00} W \ket{ikklil00} \right) \left( \prod_{ i \neq k, k \neq l} \bra{kilkli11} W \ket{kilkli11} \right) \notag  \\
&= \prod_{ i \neq k, k \neq l} \bra{ikklil00} W \ket{ikklil00} \bra{kilkli11} W \ket{kilkli11}
\geq 1 . \label{eq:unique_s_ket_prod_1}
\end{align}
Let $\ket{\psi_{ikl}^{0}} := \ket{ikklil00} $ and $\ket{\psi_{ikl}^1} := \ket{kilkli11}$,
then $\bra{\psi_{ikl}^{0}} W \ket{\psi_{ikl}^1} = 1$ holds for arbitrary $i,k,l$ satisfying $i\neq k, k \neq l$ because of Eq.~\eqref{eq:unique_w_action1}.
By considering the positivity of the submatrix
\begin{align}
\begin{pmatrix}
\bra{\psi_{ikl}^{0}} W \ket{\psi_{ikl}^{0}} & \bra{\psi_{ikl}^{0}} W \ket{\psi_{ikl}^1} \\
\bra{\psi_{ikl}^1} W \ket{\psi_{ikl}^{0}} & \bra{\psi_{ikl}^1} W \ket{\psi_{ikl}^1}
\end{pmatrix}
=
\begin{pmatrix}
\bra{\psi_{ikl}^{0}} W \ket{\psi_{ikl}^{0}} & 1 \\
1 & \bra{\psi_{ikl}^1} W \ket{\psi_{ikl}^1}
\end{pmatrix}
\geq 0,
\end{align}
we obtain the inequality
\begin{align}
\bra{\psi_{ikl}^{0}} W \ket{\psi_{ikl}^{0}} \bra{\psi_{ikl}^1} W \ket{\psi_{ikl}^1} \geq 1,
\end{align}
which completes the proof of Eq.~\eqref{eq:unique_s_ket_prod_1}.

(Case 2: $S_2$) \quad
In this case, we show the following inequality holds
\begin{align}
\prod_{\ket{\psi} \in S_3} \bra{\psi} W \ket{\psi}
&= \prod_{ i \neq j, j = k} \bra{ijkkik00} W \ket{ijkkik00} \notag \\
&= \prod_{ i \neq k} \bra{\psi_{ik}} W \ket{\psi_{ik}}
= \prod_{ i < k} \bra{\psi_{ik}} W \ket{\psi_{ik}} \bra{\psi_{ki}} W \ket{\psi_{ki}}
\geq 1, \label{eq:unique_s_ket_prod_2}
\end{align}
where we define $\ket{\psi_{ik}} :=  \ket{ikkkik00} $ for $i,k=0,\ldots,d-1$.
Since $\bra{\psi_{ik}} W \ket{\psi_{ki}} = 1$ holds arbitrary $i,k$ satisfying $i \neq k$ because of Eq.~\eqref{eq:unique_w_action2},
we consider the positivity of the submatrix
\begin{align}
\begin{pmatrix}
\bra{\psi_{ik}} W \ket{\psi_{ik}} & \bra{\psi_{ik}} W \ket{\psi_{ki}} \\
\bra{\psi_{ki}} W \ket{\psi_{ik}} & \bra{\psi_{ki}} W \ket{\psi_{ki}}
\end{pmatrix}
=
\begin{pmatrix}
\bra{\psi_{ik}} W \ket{\psi_{ik}} & 1 \\
1 & \bra{\psi_{ki}} W \ket{\psi_{ki}}
\end{pmatrix}
\geq 0,
\end{align}
and obtain
\begin{align}
\bra{\psi_{ik}} W \ket{\psi_{ik}} \bra{\psi_{ki}} W \ket{\psi_{ki}}  \geq 1
\end{align}
for any $i \neq k$,
which completes the proof of Eq.~\eqref{eq:unique_s_ket_prod_2}.

(Case 3: $S_6$) \quad
In this case, we show the following inequality holds
\begin{align}
\prod_{\ket{\psi} \in S_7} \bra{\psi} W \ket{\psi}
= \prod_{ i = k} \bra{iikkik00} W \ket{iikkik00}
= \prod_{k} \bra{kkkkkk00} W \ket{kkkkkk00}
\geq 1. \label{eq:unique_s_ket_prod_3}
\end{align}
Let $\ket{\psi_k} := \ket{kkkkkk00}$ for $k=0,\ldots,d-1$,
then $\bra{\psi_k} W \ket{\psi_{k'}} = 1$ holds for $k \neq k'$ because of Eq.~\eqref{eq:unique_w_action4}.
By considering the positivity of the submatrix
\begin{align}
&\begin{pmatrix}
\bra{\psi_0} W \ket{\psi_0} & \bra{\psi_0} W \ket{\psi_1} & \cdots & \bra{\psi_0} W \ket{\psi_{d-1}} \\
\bra{\psi_1} W \ket{\psi_0} & \bra{\psi_1} W \ket{\psi_1} & \ddots & \bra{\psi_1} W \ket{\psi_{d-1}} \\
\vdots & \ddots & \ddots & \vdots \\
\bra{\psi_{d-1}} W \ket{\psi_0} & \bra{\psi_{d-1}} W \ket{\psi_1} & \cdots & \bra{\psi_{d-1}} W \ket{\psi_{d-1}}
\end{pmatrix}
\geq 0,
\end{align}
where all the off-diagonal elements are 1,
we obtain Eq.~\eqref{eq:unique_s_ket_prod_3}.

In total, we uniquely determined the diagonal elements of $W$.

\section{Proof of Lemma~\ref{lem:norm_equality}}\label{ap:offdiag}

In order to prove Lemma~\ref{lem:norm_equality} it will be convenient to introduce the following disjoint subsets of the elements of $\mathrm{span}\{ J_U\}$.
\begin{align*}
G_{1a} &= \{ \ketbra{ij}{i'j'} \mid i \neq j \neq i' \neq j' \} \\
G_{1b} &= \{ \ketbra{ij}{i'j'} \mid i = j, i \neq i' \neq j' \mbox{ or } i=j', i \neq i' \neq j \\
&\qquad \mbox{ or } j=i', j \neq i \neq j' \mbox{ or } i'=j', i' \neq i \neq j \} \\
G_{1c} &= \{ \ketbra{ij}{i'j'} \mid  i=j, i'=j', i \neq i' \mbox{ or } i=j',j=i', i \neq j  \}, \\
G_2 &= \{ \sum_{j=i+k} \ketbra{ij}{ij} \mid k=0, \ldots, d-1 \}, \\
G_{3a} &= \{ \ketbra{ij}{ii} - \ketbra{jj}{ji}  \mid i \neq j \}, \\
G_{3b} &= \{ \ketbra{ji}{ii} - \ketbra{jj}{ij} \mid i \neq j \}, \\
G_{4a} &= \{ \ketbra{ij}{ik} - \ketbra{jj}{jk}  \mid i \neq j \neq k \}, \\
G_{4b} &= \{ \ketbra{ij}{kj} - \ketbra{ii}{ki} \mid i \neq j \neq k \}.
\end{align*}
Here $G_{3b}$ is the swapped version of $G_{3a}$ in the sense that every element of $G_{3b}$ corresponds to an element in $G_{3a}$ with the two systems swapped.
Similarly, $G_{4b}$ is the swapped version of $G_{4a}$.
We also denote $G_1 = G_{1a} \cup G_{1b} \cup G_{1c}$, $G_3 = G_{3a} \cup G_{3b}$, and $G_4 = G_{4a} \cup G_{4b}$.  The elements are of the form $\ketbra{ij}{kl}$ or their linear combination.  We refer to $i, j, k, l$ as \textit{indices}.
Note that
\begin{itemize}
\item the elements with all the four indices taking a distinct value (such as $0123$) are in $G_{1a}$,
\item the elements where the indices are composed of three distinct values (such as $0012$) are in $G_{1b}$ and $G_4$,
\item the elements where the indices are composed of two distinct values and two indices for each value (such as $0011$) are in $G_{1c}$ and $G_2$,
\item the elements where the three indices take the same value while the remaining one takes another value (such as $0001$) are in $G_3$,
\item the elements with the indices taking the same value (such as $0000$) are in $G_2$.
\end{itemize}

One can check that, for any given operator $\ketbra{ab}{a'b'}$, there is an operator $J_\alpha$ from the set $\cup_{a=1,2,3,4} G_a $ in which $\ketbra{ab}{a'b'}$ appears with a non-zero coefficient.
Thus, for any given operator $\ketbra{abcd}{a'b'c'd'}$, there is an operator $J_\alpha \otimes J_\beta$ from the set $\cup_{a,b=1,2,3,4} G_a \otimes G_b $ in which $\ketbra{abcd}{a'b'c'd'}$ appears with a non-zero coefficient.

In order to prove Lemma~\ref{lem:norm_equality}, it is enough to show that for any operator $J_s \in \cup_{a,b=1,2,3,4} G_a \otimes G_b$,
\begin{align}
\| \Tr_{in} [W J_s^t] \|_1 &= \sum_{ijkli'j'k'l'}  \| s_{ijkli'j'k'l} \bra{ijkl} W_0 \ket{i'j'k'l'} \|_1  \quad (\mbox{the same as Eq.}~\eqref{eq:norm_equality_ijkl}) \notag
\end{align}
holds.
In the remaining, we consider all possible cases of $G_a \otimes G_b$ and show that Eq.~\eqref{eq:norm_equality_ijkl} holds for each $J_s \in G_a \otimes G_b$.
Note that it is enough to consider 10 cases that satisfy $a \leq b$ because of the symmetry between the two systems.

(Case 1: $G_1 \otimes G_1$) \quad
In this case, the Eq.~\eqref{eq:norm_equality_ijkl} is trivial because every $J_s \in G_1 \otimes G_1$ has only one element,
that is, for every $J_s \in G_1 \otimes G_1$, $s_{abcda'b'c'd'} \neq 0$ for only one $\{ abcda'b'c'd' \}$,
and thus
\begin{align}
\| \Tr_{in} [W J_{s}^t] \|_1 = \| s_{abcda'b'c'd} \bra{abcd} W_0 \ket{a'b'c'd'} \|_1 = \sum_{ijkli'j'k'l'} \| s_{ijkli'j'k'l'}  \bra{ijkl} W_0 \ket{i'j'k'l'} \|_1
\end{align}

(Case 2: $G_1 \otimes G_2$) \quad
In this case, every possible $J_s$ can be written as $\ketbra{ab}{a'b'} \otimes \sum_{l=k+m} \ketbra{kl}{kl}$ for some $\{ a, b, a', b', m\}$,
and thus
\begin{align}
\| \Tr_{in} [W J_{s}^t] \|_1
&= \| \sum_{k=0}^{d-1} \bra{abk,k+m} W_0 \ket{a'b'k,k+m} \|_1 \\
&= \| \sum_{k=0}^{d-1} (\delta_{jk}\delta_{j'k} \ketbra{i,k+m,00}{i',k+m,00} + \delta_{i,k+m}\delta_{i',k+m} \ketbra{kj11}{kj'11} \notag \\
&\qquad + \delta_{jk}\delta_{i',k+m} \ketbra{i,k+m,00}{kj'11} + \delta_{i,k+m}\delta_{j'k} \ketbra{kj11}{i',k+m,00}) \|_1 \\
&= \sum_{k=0}^{d-1} (\| \delta_{jk}\delta_{j'k} \ketbra{i,k+m,00}{i',k+m,00} + \delta_{i,k+m}\delta_{i',k+m} \ketbra{kj11}{kj'11}\notag \\
&\qquad + \delta_{jk}\delta_{i',k+m} \ketbra{i,k+m,00}{kj'11} + \delta_{i,k+m}\delta_{j'k} \ketbra{kj11}{i',k+m,00} \|_1 ) \\
&= \sum_{k=0}^{d-1} \|  \bra{abk,k+m} W_0 \ket{a'b'k,k+m} \|_1
\end{align}
where the second equality holds because all the elements are non-negative.

(Case 3: $G_2 \otimes G_2$) \quad
Similar to Case 2, expanding $\| \Tr_{in} [W J_{s}^t] \|_1$ provides only non-negative elements, and thus Eq.~\eqref{eq:norm_equality_ijkl} holds.

Before proceeding to the other cases,
we first point out that almost all the calculations are just the verification of Eq.~\eqref{eq:norm_equality_ijkl}.
Moreover, the procedure for these calculations are the same, namely,
\begin{enumerate}
\item using the difference of the states of control qubits to expand the norm,
\item using inequalities like $i \neq j$ to expand the norm in two ways:
\begin{enumerate}
\item the difference of the states: for example, if $i \neq j$, then $\| \proj{i} - \proj{j} \|_1 = \| \proj{i} \|_1 + \| \proj{j} \|_1$
\item the difference of non-zero Kronecker delta: for example, if $i \neq j$, then $\| \delta_{ik} - \delta_{jk} \|_1 = \| \delta_{ik} \|_1 + \| \delta_{jk} \|_1 $
\end{enumerate}
\item all the terms become non-negative at this point, and regrouping the terms, we can obtain the r.h.s. of Eq.~\eqref{eq:norm_equality_ijkl}
\end{enumerate}

(Case 4: $G_1 \otimes G_3$) \quad
Here we consider $ G_3 $ with only the elements of the form $ \ketbra{kl}{kk} - \ketbra{ll}{lk} $, that is, $G_{3a}$.
The proof for $G_{3b}$ follows from the proof for $G_{3a}$ as follows.
Consider $F = F_1 \otimes F_2 \otimes F_T \otimes X_C$, $W_0$ satisfies $W_0 = F W_0 F$, and
\begin{align}
\| \Tr_{in} [W (F_1 J_s F_1)^t] \|_1
&= \| \Tr_{in} [W_0 (F_1 J_s F_1)^t] \|_1 \\
&= \| \Tr_{in} [(F_1 W_0 F_1) J_s^t] \|_1 \\
&= \| \Tr_{in} [(F_2 \otimes F_T \otimes X_C) W_0 (F_2 \otimes F_T \otimes X_C) J_s^t] \|_1 \\
&= \| (F_T \otimes X_C) \Tr_{in} [(F_2 W_0 F_2) J_s^t ] (F_T \otimes X_C)\|_1 \\
&= \| \Tr_{in} [(F_2 W_0 F_2) J_s^t ] \|_1 \\
&= \| \Tr_{in} [W_0 (F_2 J_s F_2)^t] \|_1
\end{align}
holds,
where the fifth equality holds because the flip operation and Pauli X operation only swaps the position of matrix elements and does not change the element-wise 1-norm.
Thus, showing the equation for $J_s \in G_1 \otimes G_{3b}$ is equivalent to showing the equation for $J_s \in (F G_1 F) \otimes G_{3a} = G_1 \otimes G_{3a}$

In this case, every possible $J_s$ can be written as $\ketbra{ij}{i'j'} \otimes (\ketbra{kl}{kk} - \ketbra{ll}{lk})$ for some $\{ i,j,i',j',k,l\}$, and thus
\begin{align}
\| \Tr_{in} [W J_s^t] \|_1 %
&=  \| \bra{ijkl} {W_0} \ket{i'j'kk} - \bra{ijll} {W_0} \ket{i'j'lk}  \|_1 \\
&= \| (\delta_{jk}\delta_{j'k} \ketbra{il00}{i'k00} - \delta_{jl}\delta_{j'l} \ketbra{il00}{i'k00}) \notag \\
&\quad+ (\delta_{il}\delta_{i'k} \ketbra{kj11}{kj'11} - \delta_{il}\delta_{i'k} \ketbra{lj11}{lj'11}) \notag \\
&\quad+ (\delta_{jk}\delta_{i'k} \ketbra{il00}{kj'11} - \delta_{jl}\delta_{i'k} \ketbra{il00}{lj'11} ) \notag \\
&\quad+ (\delta_{il}\delta_{j'k} \ketbra{kj11}{i'k00} - \delta_{il}\delta_{j'l} \ketbra{lj11}{i'k00}) \|_1 \\
&= \| \delta_{jk}\delta_{j'k} \ketbra{il00}{i'k00} - \delta_{jl}\delta_{j'l} \ketbra{il00}{i'k00} \|_1 \notag \\
&\quad+ \| \delta_{il}\delta_{i'k} \ketbra{kj11}{kj'11} - \delta_{il}\delta_{i'k} \ketbra{lj11}{lj'11} \|_1 \notag \\
&\quad+ \| \delta_{jk}\delta_{i'k} \ketbra{il00}{kj'11} - \delta_{jl}\delta_{i'k} \ketbra{il00}{lj'11} \|_1 \notag \\
&\quad+ \| \delta_{il}\delta_{j'k} \ketbra{kj11}{i'k00} - \delta_{il}\delta_{j'l} \ketbra{lj11}{i'k00} \|_1 \\
&= \| \delta_{jk}\delta_{j'k} \ketbra{il00}{i'k00} \|_1 + \| \delta_{jl}\delta_{j'l} \ketbra{il00}{i'k00} \|_1 \notag \\
&\quad+ \| \delta_{il}\delta_{i'k} \ketbra{kj11}{kj'11} \|_1 + \| \delta_{il}\delta_{i'k} \ketbra{lj11}{lj'11} \|_1 \notag \\
&\quad+ \| \delta_{jk}\delta_{i'k} \ketbra{il00}{kj'11} \|_1 + \| \delta_{jl}\delta_{i'k} \ketbra{il00}{lj'11} \|_1 \notag \\
&\quad+ \| \delta_{il}\delta_{j'k} \ketbra{kj11}{i'k00} \|_1 + \| \delta_{il}\delta_{j'l} \ketbra{lj11}{i'k00} \|_1 \\
&= \| \delta_{jk}\delta_{j'k} \ketbra{il00}{i'k00} + \delta_{il}\delta_{i'k} \ketbra{kj11}{kj'11}  \notag \\
&\qquad+  \delta_{jk}\delta_{i'k} \ketbra{il00}{kj'11} + \delta_{il}\delta_{j'k} \ketbra{kj11}{i'k00} \|_1 \notag \\
&\quad+ \| \delta_{jl}\delta_{j'l} \ketbra{il00}{i'k00} + \delta_{il}\delta_{i'k} \ketbra{lj11}{lj'11} \notag \\
&\qquad+ \delta_{jl}\delta_{i'k} \ketbra{il00}{lj'11} + \delta_{il}\delta_{j'l} \ketbra{lj11}{i'k00} \|_1 \\
&= \| \bra{ijkl} {W_0} \ket{i'j'kk} \|_1 + \| \bra{ijll} {W_0} \ket{i'j'lk}  \|_1 \\
\end{align}
where the third equality holds because the states of control qubits are different,
the forth equality holds because $k \neq l$,
and the fifth equality holds because all the terms are non-negative.

(Case 5: $G_2 \otimes G_3$) \quad
As in the case 4, we consider $G_{3a}$ instead.
In this case, every possible $J_s$ can be written as $\sum_{j=i+m} \ketbra{ij}{ij} \otimes (\ketbra{kl}{kk} - \ketbra{ll}{lk})$ for some $\{m,k,l\}$, and thus
\begin{align}
\| \Tr_{in} [W J_s^t] \|_1 &= \| \sum_{i=0}^{d-1} (\bra{i,i+m,kl} {W_0} \ket{i,i+m,kk} - \bra{i,i+m,ll} {W_0} \ket{i,i+m,lk})  \|_1 \\
&= \| \sum_{i=0}^{d-1}  \delta_{i+m,k}\delta_{i+m,k} \ketbra{il00}{ik00} - \delta_{i+m,l}\delta_{i+m,l} \ketbra{il00}{ik00} \|_1 \notag \\
&\quad + \| \sum_{i=0}^{d-1} \delta_{il}\delta_{ik} \ketbra{k,i+m,11}{k,i+m,11} - \delta_{il}\delta_{ik} \ketbra{l,i+m,11}{l,i+m,11} \|_1 \notag \\
&\quad + \| \sum_{i=0}^{d-1} \delta_{i+m,k}\delta_{ik} \ketbra{il00}{k,i+m,11} - \delta_{i+m,l}\delta_{ik} \ketbra{il00}{l,i+m,11} \|_1  \notag \\
&\quad + \| \sum_{i=0}^{d-1} \delta_{il}\delta_{i+m,k} \ketbra{k,i+m,11}{ik00} - \delta_{il}\delta_{i+m,l} \ketbra{l,i+m,11}{ik00} \|_1
\end{align}
\begin{align}
\phantom{\| \Tr_{in} [W J_s^t] \|_1} &= \| \sum_{i=0}^{d-1}  \delta_{i+m,k}\delta_{i+m,k} \ketbra{il00}{ik00} \|_1 + \| \sum_{i=0}^{d-1}\delta_{i+m,l}\delta_{i+m,l} \ketbra{il00}{ik00} \|_1 \notag \\
&\quad + \| \sum_{i=0}^{d-1} \delta_{il}\delta_{ik} \ketbra{k,i+m,11}{k,i+m,11} \|_1 + \| \sum_{i=0}^{d-1}\delta_{il}\delta_{ik} \ketbra{l,i+m,11}{l,i+m,11} \|_1 \notag \\
&\quad + \| \sum_{i=0}^{d-1} \delta_{i+m,k}\delta_{ik} \ketbra{il00}{k,i+m,11} \|_1 + \| \sum_{i=0}^{d-1}\delta_{i+m,l}\delta_{ik} \ketbra{il00}{l,i+m,11} \|_1  \notag \\
&\quad + \| \sum_{i=0}^{d-1} \delta_{il}\delta_{i+m,k} \ketbra{k,i+m,11}{ik00} \|_1 + \| \sum_{i=0}^{d-1}\delta_{il}\delta_{i+m,l} \ketbra{l,i+m,11}{ik00} \|_1 \\
&= \sum_{i=0}^{d-1} ( \|  \bra{i,i+m,kl} {W_0} \ket{i,i+m,kk}\|_1 + \| \bra{i,i+m,ll} {W_0} \ket{i,i+m,lk}  \|_1 )
\end{align}

Here, the second equality holds because the states of control qubits are different,
the third equality holds because $k \neq l$,
and the fourth equality holds because all the terms are non-negative.

(Case 6: $G_3 \otimes G_3$) \quad
In this case, we divide the first $G_3$ into $G_{3a}$ and $G_{3b}$.
Here we only consider $G_{3a} \otimes G_{3a}$ and $G_{3b} \otimes G_{3a}$,
because the proof for $G_{3b} \otimes G_{3b}$ is equivalent to the proof for $G_{3a} \otimes G_{3a}$,
the proof for $G_{3b} \otimes G_{3a}$ is equivalent to the proof for $G_{3a} \otimes G_{3b}$.

For the case of $G_{3a} \otimes G_{3a}$,
$J_s$ can be written as $(\ketbra{ij}{ii} - \ketbra{jj}{ji}) \otimes (\ketbra{kl}{kk} - \ketbra{ll}{lk})$ for some $\{i,j,k,l\}$, and thus
\begin{align}
&\| \Tr_{in} [W J_s^t] \|_1 \notag \\
&= \| \bra{ijkl} {W_0} \ket{iikk} - \bra{ijll} {W_0} \ket{iilk}
 - \bra{jjkl} {W_0} \ket{jikk} + \bra{jjll} {W_0} \ket{jilk} \|_1 \\
&= \| (\delta_{jk}\delta_{ik} - \delta_{jl}\delta_{il} ) \ketbra{il00}{ik00}
- (\delta_{jk}\delta_{ik} - \delta_{jl}\delta_{il} ) \ketbra{jl00}{jk00} \|_1 \notag \\
&\quad + \| (\delta_{il}\delta_{ik} -\delta_{jl}\delta_{jk}) \ketbra{kj11}{ki11}
- (\delta_{il}\delta_{ik} - \delta_{jl}\delta_{jk} ) \ketbra{lj11}{li11} \|_1 \notag \\
&\quad + \| \delta_{jk}\delta_{ik} \ketbra{il00}{ki11} - \delta_{jl}\delta_{ik} \ketbra{il00}{li11}
-\delta_{jk}\delta_{jk} \ketbra{jl00}{ki11} + \delta_{jl}\delta_{jk} \ketbra{jl00}{li11} \|_1 \notag \\
&\quad + \| \delta_{il}\delta_{ik} \ketbra{kj11}{ik00} - \delta_{il}\delta_{il} \ketbra{lj11}{ik00}
-\delta_{jl}\delta_{ik} \ketbra{kj11}{jk00} + \delta_{jl}\delta_{il} \ketbra{lj11}{jk00} \|_1 \\
&= \| (\delta_{jk}\delta_{ik} - \delta_{jl}\delta_{il}) \ketbra{il00}{ik00} \|_1
+ \| (\delta_{jk}\delta_{ik} - \delta_{jl}\delta_{il}) \ketbra{jl00}{jk00} \|_1 \notag \\
&\quad + \| (\delta_{il}\delta_{ik} -\delta_{jl}\delta_{jk}) \ketbra{kj11}{ki11} \|_1
+ \| (\delta_{il}\delta_{ik} - \delta_{jl}\delta_{jk} ) \ketbra{lj11}{li11} \|_1 \notag \\
&\quad + \| \delta_{jk}\delta_{ik} \ketbra{il00}{ki11} \|_1 + \| \delta_{jl}\delta_{ik} \ketbra{il00}{li11} \|_1
+ \| \delta_{jk}\delta_{jk} \ketbra{jl00}{ki11} \|_1 + \| \delta_{jl}\delta_{jk} \ketbra{jl00}{li11} \|_1 \notag \\
&\quad + \| \delta_{il}\delta_{ik} \ketbra{kj11}{ik00} \|_1 + \| \delta_{il}\delta_{il} \ketbra{lj11}{ik00} \|_1
+ \| \delta_{jl}\delta_{ik} \ketbra{kj11}{jk00} \|_1 + \| \delta_{jl}\delta_{il} \ketbra{lj11}{jk00} \|_1 \\
&= \| \delta_{jk}\delta_{ik} \ketbra{il00}{ik00} \|_1 + \| \delta_{jl}\delta_{il} \ketbra{il00}{ik00} \|_1
+ \| \delta_{jk}\delta_{ik} \ketbra{jl00}{jk00} \|_1 + \| \delta_{jl}\delta_{il} \ketbra{jl00}{jk00} \|_1 \notag \\
&\quad + \| \delta_{il}\delta_{ik} \ketbra{kj11}{ki11} \|_1 + \| \delta_{jl}\delta_{jk} \ketbra{kj11}{ki11} \|_1
+ \| \delta_{il}\delta_{ik} \ketbra{lj11}{li11} \|_1 + \| \delta_{jl}\delta_{jk} \ketbra{lj11}{li11} \|_1 \notag \\
&\quad + \| \delta_{jk}\delta_{ik} \ketbra{il00}{ki11} \|_1 + \| \delta_{jl}\delta_{ik} \ketbra{il00}{li11} \|_1
+ \| \delta_{jk}\delta_{jk} \ketbra{jl00}{ki11} \|_1 + \| \delta_{jl}\delta_{jk} \ketbra{jl00}{li11} \|_1 \notag \\
&\quad + \| \delta_{il}\delta_{ik} \ketbra{kj11}{ik00} \|_1 + \| \delta_{il}\delta_{il} \ketbra{lj11}{ik00} \|_1
+ \| \delta_{jl}\delta_{ik} \ketbra{kj11}{jk00} \|_1 + \| \delta_{jl}\delta_{il} \ketbra{lj11}{jk00} \|_1 \\
&= \| \bra{ijkl} {W_0} \ket{iikk} \|_1 + \| \bra{ijll} {W_0} \ket{iilk} \|_1
 + \| \bra{jjkl} {W_0} \ket{jikk} \|_1 + \| \bra{jjll} {W_0} \ket{jilk} \|_1
\end{align}
where the second equality holds because the states of control qubits are different,
the third and the fourth equality holds because $i \neq j, k \neq l$,
and the fifth equality holds because all the terms are non-negative.

Similarly, an element of $G_{3b} \otimes G_{3a}$ can be written as
$(\ketbra{ji}{ii} - \ketbra{jj}{ij}) \otimes (\ketbra{kl}{kk} - \ketbra{ll}{lk})$ for some $\{i,j,k,l\}$, and thus
\begin{align}
& \| \Tr_{in} [W J_s^t] \|_1 \notag \\
&= \| \bra{jikl} {W_0} \ket{iikk} - \bra{jill} {W_0} \ket{iilk} - \bra{jjkl} {W_0} \ket{ijkk} + \bra{jjll} {W_0} \ket{ijlk} \|_1 \\
&= \| (\delta_{ik}\delta_{ik} - \delta_{il}\delta_{il} -\delta_{jk}\delta_{jk}  + \delta_{jl}\delta_{jl}) \ketbra{jl00}{ik00}) \|_1 \notag\\
&\quad + \| (\delta_{jl}\delta_{ik} \ketbra{ki11}{ki11} - \delta_{jl}\delta_{ik} \ketbra{li11}{li11}
-\delta_{jl}\delta_{ik} \ketbra{kj11}{kj11} + \delta_{jl}\delta_{ik} \ketbra{lj11}{lj11}) \|_1 \notag\\
&\quad + \| (\delta_{ik}\delta_{ik} \ketbra{jl00}{ki11} - \delta_{il}\delta_{ik} \ketbra{jl00}{li11}
-\delta_{jk}\delta_{ik} \ketbra{jl00}{kj11} + \delta_{jl}\delta_{ik} \ketbra{jl00}{lj11} ) \|_1 \notag\\
&\quad + \| (\delta_{jl}\delta_{ik} \ketbra{ki11}{ik00} - \delta_{jl}\delta_{il} \ketbra{li11}{ik00}
-\delta_{jl}\delta_{jk} \ketbra{kj11}{ik00} + \delta_{jl}\delta_{jl} \ketbra{lj11}{ik00} \|_1 \\
&= \| (\delta_{ik}\delta_{ik} - \delta_{il}\delta_{il} -\delta_{jk}\delta_{jk}  + \delta_{jl}\delta_{jl}) \ketbra{jl00}{ik00}) \|_1 \notag\\
&\quad + \| \delta_{jl}\delta_{ik} \ketbra{ki11}{ki11} \|_1 + \| \delta_{jl}\delta_{ik} \ketbra{li11}{li11} \|_1
+ \| \delta_{jl}\delta_{ik} \ketbra{kj11}{kj11} \|_1 + \| \delta_{jl}\delta_{ik} \ketbra{lj11}{lj11} \|_1 \notag\\
&\quad + \| \delta_{ik}\delta_{ik} \ketbra{jl00}{ki11} \|_1 + \| \delta_{il}\delta_{ik} \ketbra{jl00}{li11} \|_1
+ \| \delta_{jk}\delta_{ik} \ketbra{jl00}{kj11} \|_1 + \| \delta_{jl}\delta_{ik} \ketbra{jl00}{lj11} \|_1 \notag\\
&\quad + \| \delta_{jl}\delta_{ik} \ketbra{ki11}{ik00} \|_1 + \| \delta_{jl}\delta_{il} \ketbra{li11}{ik00} \|_1
+ \| \delta_{jl}\delta_{jk} \ketbra{kj11}{ik00} \|_1 + \| \delta_{jl}\delta_{jl} \ketbra{lj11}{ik00} \|_1 \\
&= \| (\delta_{ik}\delta_{ik} + \delta_{jl}\delta_{jl}) \ketbra{jl00}{ik00}) \|_1 + \| ( \delta_{il}\delta_{il} +\delta_{jk}\delta_{jk} ) \ketbra{jl00}{ik00}) \|_1   \notag\\
&\quad + \| \delta_{jl}\delta_{ik} \ketbra{ki11}{ki11} \|_1 + \| \delta_{jl}\delta_{ik} \ketbra{li11}{li11} \|_1
+ \| \delta_{jl}\delta_{ik} \ketbra{kj11}{kj11} \|_1 + \| \delta_{jl}\delta_{ik} \ketbra{lj11}{lj11} \|_1 \notag\\
&\quad + \| \delta_{ik}\delta_{ik} \ketbra{jl00}{ki11} \|_1 + \| \delta_{il}\delta_{ik} \ketbra{jl00}{li11} \|_1
+ \| \delta_{jk}\delta_{ik} \ketbra{jl00}{kj11} \|_1 + \| \delta_{jl}\delta_{ik} \ketbra{jl00}{lj11} \|_1 \notag\\
&\quad + \| \delta_{jl}\delta_{ik} \ketbra{ki11}{ik00} \|_1 + \| \delta_{jl}\delta_{il} \ketbra{li11}{ik00} \|_1
+ \| \delta_{jl}\delta_{jk} \ketbra{kj11}{ik00} \|_1 + \| \delta_{jl}\delta_{jl} \ketbra{lj11}{ik00} \|_1 \\
&= \| \bra{jikl} {W_0} \ket{iikk} \|_1 + \| \bra{jill} {W_0} \ket{iilk} \|_1 + \| \bra{jjkl} {W_0} \ket{ijkk} \|_1 + \| \bra{jjll} {W_0} \ket{ijlk} \|_1 \\
\end{align}
where the second equality holds because the states of control qubits are different.
the third and the fourth equality holds because $i \neq j, k \neq l$,
and the fifth equality holds because all the terms are non-negative.

(Case 7: $G_1 \otimes G_4$) \quad
Here we consider $ G_4 $ with only the elements of the form $ \ketbra{ij}{ik} - \ketbra{jj}{jk} $.
The elements of the other form can be evaluated in the same way, since the two input systems are symmetric.
We denote this set as $G_{4a}$, that is,

In this case, every possible $J_s$ can be written as $\ketbra{ij}{i'j'} \otimes (\ketbra{kl}{km} - \ketbra{ll}{lm} )$ for some $\{i,j,i',j',k,l,m\}$, and thus
\begin{align}
\| \Tr_{in} [W J_s^t] \|_1
&= \| \bra{ijkl} {W_0} \ket{i'j'km} - \bra{ijll} {W_0} \ket{i'j'lm}  \|_1 \\
&= \| (\delta_{jk}\delta_{j'k} \ketbra{il00}{i'm00} - \delta_{jl}\delta_{j'l} \ketbra{il00}{i'm00}) \|_1 \notag \\
&\quad+ \| (\delta_{il}\delta_{i'm} \ketbra{kj11}{kj'11} - \delta_{il}\delta_{i'm} \ketbra{lj11}{lj'11}) \|_1 \notag \\
&\quad+ \| (\delta_{jk}\delta_{i'm} \ketbra{il00}{kj'11} - \delta_{jl}\delta_{i'm} \ketbra{il00}{lj'11} ) \|_1 \notag \\
&\quad+ \| (\delta_{il}\delta_{j'k} \ketbra{kj11}{i'm00} - \delta_{il}\delta_{j'l} \ketbra{lj11}{i'm00}) \|_1 \\
&= \| \delta_{jk}\delta_{j'k} \ketbra{il00}{i'm00} \|_1 + \| \delta_{jl}\delta_{j'l} \ketbra{il00}{i'm00} \|_1 \notag \\
&\quad+ \| \delta_{il}\delta_{i'm} \ketbra{kj11}{kj'11} \|_1 + \| \delta_{il}\delta_{i'm} \ketbra{lj11}{lj'11} \|_1 \notag \\
&\quad+ \| \delta_{jk}\delta_{i'm} \ketbra{il00}{kj'11} \|_1 + \| \delta_{jl}\delta_{i'm} \ketbra{il00}{lj'11} \|_1 \notag \\
&\quad+ \| \delta_{il}\delta_{j'k} \ketbra{kj11}{i'm00} \|_1 + \| \delta_{il}\delta_{j'l} \ketbra{lj11}{i'm00} \|_1 \\
&= \| \bra{ijkl} {W_0} \ket{i'j'km} \|_1 + \| \bra{ijll} {W_0} \ket{i'j'lm}  \|_1 \\
\end{align}
where the second equality holds because the states of control qubits are different,
the third equality holds because $k \neq l \neq m$,
and the fourth equality holds because all the terms are non-negative.

(Case 8: $G_2 \otimes G_4$) \quad
As in the case 7, we only consider $G_{4a}$ instead of $G_4$.
In this case, every possible $J_s$ can be written as $\sum_{j=i+n} \ketbra{ij}{ij} \otimes (\ketbra{kl}{km} - \ketbra{ll}{lm} )$ for some $\{n,k,l,m\}$, and thus
\begin{align}
& \| \Tr_{in} [W J_s^t] \|_1 \notag \\
&= \| \sum_{i=0}^{d-1} (\bra{i,i+n,kl} {W_0} \ket{i,i+n,km} - \bra{i,i+n,ll} {W_0} \ket{i,i+n,lm})  \|_1 \\
&= \| \sum_{i=0}^{d-1}  \delta_{i+n,k}\delta_{i+n,k} \ketbra{il00}{im00} - \delta_{i+n,l}\delta_{i+n,l} \ketbra{il00}{im00} \|_1 \notag \\
&\quad + \| \sum_{i=0}^{d-1} \delta_{il}\delta_{im} \ketbra{k,i+n,11}{k,i+n,11} - \delta_{il}\delta_{im} \ketbra{l,i+n,11}{l,i+n,11} \|_1 \notag \\
&\quad + \| \sum_{i=0}^{d-1} \delta_{i+n,k}\delta_{im} \ketbra{il00}{k,i+n,11} - \delta_{i+n,l}\delta_{im} \ketbra{il00}{l,i+n,11} \|_1  \notag \\
&\quad + \| \sum_{i=0}^{d-1} \delta_{il}\delta_{i+n,k} \ketbra{k,i+n,11}{im00} - \delta_{il}\delta_{i+n,l} \ketbra{l,i+n,11}{im00} \|_1 \\
&= \| \sum_{i=0}^{d-1} \delta_{i+n,k}\delta_{i+n,k} \ketbra{il00}{im00} \|_1 + \| \sum_{i=0}^{d-1} \delta_{i+n,l}\delta_{i+n,l} \ketbra{il00}{im00} \|_1 \notag \\
&\quad + \| \sum_{i=0}^{d-1} \delta_{il}\delta_{im} \ketbra{k,i+n,11}{k,i+n,11} \|_1 + \| \sum_{i=0}^{d-1} \delta_{il}\delta_{im} \ketbra{l,i+n,11}{l,i+n,11} \|_1 \notag \\
&\quad + \| \sum_{i=0}^{d-1} \delta_{i+n,k}\delta_{im} \ketbra{il00}{k,i+n,11} \|_1 + \| \sum_{i=0}^{d-1} \delta_{i+n,l}\delta_{im} \ketbra{il00}{l,i+n,11} \|_1  \notag \\
&\quad + \| \sum_{i=0}^{d-1} \delta_{il}\delta_{i+n,k} \ketbra{k,i+n,11}{im00} \|_1 + \| \sum_{i=0}^{d-1}  \delta_{il}\delta_{i+n,l} \ketbra{l,i+n,11}{im00} \|_1 \\
&= \sum_{i=0}^{d-1} (\|  \bra{i,i+n,kl} {W_0} \ket{i,i+n,km} \|_1 + \| \bra{i,i+n,ll} {W_0} \ket{i,i+n,lm}  \|_1 )
\end{align}
where the second equality holds because the states of control qubits are different,
the third equality holds because $k \neq l \neq m$,
and the fourth equality holds because all the terms are non-negative.

(Case 9: $G_3 \otimes G_4$) \quad
In this case, we divide the $G_3$ into $G_{3a}$ and $G_{3b}$, and $G_4$ into $G_{4a}$ and $G_{4b}$.
Here we only consider $G_{3a} \otimes G_{4a}$ and $G_{3b} \otimes G_{4a}$
because the proof for $G_{3b} \otimes G_{4b}$ is equivalent to the proof for $G_{3a} \otimes G_{4a}$,
and the proof for $G_{3a} \otimes G_{4b}$ is equivalent to the proof for $G_{3b} \otimes G_{4a}$.

We first consider the case of $G_{3a} \otimes G_{4a}$,
where $J_s$ can be written as $(\ketbra{ij}{ii} - \ketbra{jj}{ji}) \otimes (\ketbra{kl}{km} - \ketbra{ll}{lm} )$ for some $\{i,j,k,l,m\}$.
In this case,
\begin{align}
&\| \Tr_{in} [W J_s^t] \|_1 \notag \\
&= \| \bra{ijkl} {W_0} \ket{iikm} - \bra{ijll} {W_0} \ket{iilm} - \bra{jjkl} {W_0} \ket{jikm} + \bra{jjll} {W_0} \ket{jilm} \|_1 \\
&=  \| (\delta_{jk}\delta_{ik} - \delta_{jl}\delta_{il}) \ketbra{il00}{im00}
- (\delta_{jk}\delta_{ik} - \delta_{jl}\delta_{il}) \ketbra{jl00}{jm00}) \|_1 \notag \\
&\quad + \| (\delta_{il}\delta_{im} -\delta_{jl}\delta_{jm}) \ketbra{kj11}{ki11}
- (\delta_{il}\delta_{im} - \delta_{jl}\delta_{jm}) \ketbra{lj11}{li11}) \|_1 \notag \\
&\quad + \| (\delta_{jk}\delta_{im} \ketbra{il00}{ki11} - \delta_{jl}\delta_{im} \ketbra{il00}{li11}
-\delta_{jk}\delta_{jm} \ketbra{jl00}{ki11} + \delta_{jl}\delta_{jm} \ketbra{jl00}{li11} ) \|_1 \notag \\
&\quad + \| (\delta_{il}\delta_{ik} \ketbra{kj11}{im00} - \delta_{il}\delta_{il} \ketbra{lj11}{im00}
-\delta_{jl}\delta_{ik} \ketbra{kj11}{jm00} + \delta_{jl}\delta_{il} \ketbra{lj11}{jm00} \|_1 \\
&=  \| (\delta_{jk}\delta_{ik} - \delta_{jl}\delta_{il}) \ketbra{il00}{im00} \|_1
+ \| (\delta_{jk}\delta_{ik} - \delta_{jl}\delta_{il}) \ketbra{jl00}{jm00}) \|_1 \notag \\
&\quad + \| (\delta_{il}\delta_{im} -\delta_{jl}\delta_{jm}) \ketbra{kj11}{ki11} \|_1
+ \| (\delta_{il}\delta_{im} - \delta_{jl}\delta_{jm}) \ketbra{lj11}{li11}) \|_1 \notag \\
&\quad + \| \delta_{jk}\delta_{im} \ketbra{il00}{ki11} \|_1 + \| \delta_{jl}\delta_{im} \ketbra{il00}{li11} \|_1
+ \| \delta_{jk}\delta_{jm} \ketbra{jl00}{ki11} \|_1 + \| \delta_{jl}\delta_{jm} \ketbra{jl00}{li11} \|_1 \notag \\
&\quad + \| \delta_{il}\delta_{ik} \ketbra{kj11}{im00} \|_1 + \| \delta_{il}\delta_{il} \ketbra{lj11}{im00} \|_1
+ \| \delta_{jl}\delta_{ik} \ketbra{kj11}{jm00} \|_1 + \| \delta_{jl}\delta_{il} \ketbra{lj11}{jm00} \|_1 \\
&=  \| \delta_{jk}\delta_{ik} \ketbra{il00}{im00} \|_1 + \| \delta_{jl}\delta_{il} \ketbra{il00}{im00} \|_1
+ \| \delta_{jk}\delta_{ik} \ketbra{jl00}{jm00} \|_1 + \| \delta_{jl}\delta_{il} \ketbra{jl00}{jm00} \|_1 \notag \\
&\quad + \| \delta_{il}\delta_{im} \ketbra{kj11}{ki11} \|_1 + \| \delta_{jl}\delta_{jm} \ketbra{kj11}{ki11} \|_1
+ \| \delta_{il}\delta_{im} \ketbra{lj11}{li11} \|_1 + \| \delta_{jl}\delta_{jm} \ketbra{lj11}{li11} \|_1 \notag \\
&\quad + \| \delta_{jk}\delta_{im} \ketbra{il00}{ki11} \|_1 + \| \delta_{jl}\delta_{im} \ketbra{il00}{li11} \|_1
+ \| \delta_{jk}\delta_{jm} \ketbra{jl00}{ki11} \|_1 + \| \delta_{jl}\delta_{jm} \ketbra{jl00}{li11}  \|_1 \notag \\
&\quad + \| \delta_{il}\delta_{ik} \ketbra{kj11}{im00} \|_1 + \| \delta_{il}\delta_{il} \ketbra{lj11}{im00} \|_1
+ \| \delta_{jl}\delta_{ik} \ketbra{kj11}{jm00} \|_1 + \| \delta_{jl}\delta_{il} \ketbra{lj11}{jm00} \|_1 \\
&= \| \bra{ijkl} {W_0} \ket{iikm} \|_1 + \| \bra{ijll} {W_0} \ket{iilm} \|_1
+ \| \bra{jjkl} {W_0} \ket{jikm} \|_1 + \| \bra{jjll} {W_0} \ket{jilm} \|_1
\end{align}
where the second equality holds because the states of control qubits are different,
the third and the fourth equality holds because $i \neq j$ and $k \neq l \neq m$,
and the fifth equality holds because all the terms are non-negative.

Similarly, $J_s \in G_{3b} \otimes G_{4a}$ can be written as
$(\ketbra{ji}{ii} - \ketbra{jj}{ij}) \otimes (\ketbra{kl}{km} - \ketbra{ll}{lm} )$ for some $\{i,j,k,l,m\}$.
In this case,
\begin{align}
&\| \Tr_{in} [W J_s^t] \|_1 \notag \\
&=\| \bra{jikl} {W_0} \ket{iikm} - \bra{jill} {W_0} \ket{iilm} - \bra{jjkl} {W_0} \ket{ijkm} + \bra{jjll} {W_0} \ket{ijlm} \|_1 \\
&= \| (\delta_{ik}\delta_{ik} - \delta_{il}\delta_{il} -\delta_{jk}\delta_{jk}  + \delta_{jl}\delta_{jl}) \ketbra{jl00}{im00}) \|_1 \notag\\
&\quad + \| (\delta_{jl}\delta_{im} \ketbra{ki11}{ki11} - \delta_{jl}\delta_{im} \ketbra{li11}{li11}
-\delta_{jl}\delta_{im} \ketbra{kj11}{kj11} + \delta_{jl}\delta_{im} \ketbra{lj11}{lj11}) \|_1 \notag\\
&\quad + \| (\delta_{ik}\delta_{im} \ketbra{jl00}{ki11} - \delta_{il}\delta_{im} \ketbra{jl00}{li11}
-\delta_{jk}\delta_{im} \ketbra{jl00}{kj11} + \delta_{jl}\delta_{im} \ketbra{jl00}{lj11} ) \|_1 \notag\\
&\quad + \| (\delta_{jl}\delta_{ik} \ketbra{ki11}{im00} - \delta_{jl}\delta_{il} \ketbra{li11}{im00}
-\delta_{jl}\delta_{jk} \ketbra{kj11}{im00} + \delta_{jl}\delta_{jl} \ketbra{lj11}{im00} \|_1 \\
&= \| (\delta_{ik}\delta_{ik} + \delta_{jl}\delta_{jl}) \ketbra{jl00}{im00} \|_1 + \| (\delta_{il}\delta_{il} +\delta_{jk}\delta_{jk}  ) \ketbra{jl00}{im00} \|_1 \notag\\
&\quad + \| (\delta_{jl}\delta_{im} \ketbra{ki11}{ki11} \|_1 + \| \delta_{jl}\delta_{im} \ketbra{li11}{li11} \|_1
+ \| \delta_{jl}\delta_{im} \ketbra{kj11}{kj11} \|_1 + \| \delta_{jl}\delta_{im} \ketbra{lj11}{lj11}) \|_1 \notag\\
&\quad + \| (\delta_{ik}\delta_{im} \ketbra{jl00}{ki11} \|_1 + \| \delta_{il}\delta_{im} \ketbra{jl00}{li11} \|_1
+ \| \delta_{jk}\delta_{im} \ketbra{jl00}{kj11} \|_1 + \| \delta_{jl}\delta_{im} \ketbra{jl00}{lj11} ) \|_1 \notag\\
&\quad + \| (\delta_{jl}\delta_{ik} \ketbra{ki11}{im00} \|_1 + \| \delta_{jl}\delta_{il} \ketbra{li11}{im00} \|_1
+ \| \delta_{jl}\delta_{jk} \ketbra{kj11}{im00} \|_1 + \| \delta_{jl}\delta_{jl} \ketbra{lj11}{im00} \|_1 \\
&=\| \bra{jikl} {W_0} \ket{iikm} \|_1 + \| \bra{jill} {W_0} \ket{iilm} \|_1 + \| \bra{jjkl} {W_0} \ket{ijkm} \|_1 + \| \bra{jjll} {W_0} \ket{ijlm} \|_1
\end{align}
where the second equality holds because the states of control qubits are different.
the third holds because $i \neq j$ and $k \neq l \neq m$,
and the fourth equality holds because all the terms are non-negative.

(Case 10: $G_4 \otimes G_4$) \quad
Here we only consider $G_{4a} \otimes G_{4a}$ and $G_{4b} \otimes G_{4a}$ as the previous case.

We first consider the case of $G_{4a} \otimes G_{4a}$, which element can be written as
$(\ketbra{ij}{in} - \ketbra{jj}{jn}) \otimes (\ketbra{kl}{km} - \ketbra{ll}{lm} )$ for some $\{i,j,n,k,l,m\}$.
\begin{align}
& \| \Tr_{in} [W J_s^t] \|_1 \notag \\
&= \| \bra{ijkl} {W_0} \ket{inkm} - \bra{ijll} {W_0} \ket{inlm} - \bra{jjkl} {W_0} \ket{jnkm} + \bra{jjll} {W_0} \ket{jnlm} \|_1 \\
&=  \| (\delta_{jk}\delta_{nk} - \delta_{jl}\delta_{nl}) \ketbra{il00}{im00}
- (\delta_{jk}\delta_{nk} - \delta_{jl}\delta_{nl}) \ketbra{jl00}{jm00}) \|_1 \notag \\
&\quad + \| (\delta_{il}\delta_{im} -\delta_{jl}\delta_{jm}) \ketbra{kj11}{kn11}
- (\delta_{il}\delta_{im} - \delta_{jl}\delta_{jm}) \ketbra{lj11}{ln11}) \|_1 \notag \\
&\quad + \| \delta_{jk}\delta_{im} \ketbra{il00}{kn11} - \delta_{jl}\delta_{im} \ketbra{il00}{ln11}
-\delta_{jk}\delta_{jm} \ketbra{jl00}{kn11} + \delta_{jl}\delta_{jm} \ketbra{jl00}{ln11} \|_1 \notag \\
&\quad + \| \delta_{il}\delta_{nk} \ketbra{kj11}{im00} - \delta_{il}\delta_{nl} \ketbra{lj11}{im00}
-\delta_{jl}\delta_{nk} \ketbra{kj11}{jm00} + \delta_{jl}\delta_{nl} \ketbra{lj11}{jm00} \|_1 \\
&=  \| (\delta_{jk}\delta_{nk} - \delta_{jl}\delta_{nl}) \ketbra{il00}{im00} \|_1
+ \| (\delta_{jk}\delta_{nk} - \delta_{jl}\delta_{nl}) \ketbra{jl00}{jm00}) \|_1 \notag \\
&\quad + \| (\delta_{il}\delta_{im} -\delta_{jl}\delta_{jm}) \ketbra{kj11}{kn11} \|_1
+ \| (\delta_{il}\delta_{im} - \delta_{jl}\delta_{jm}) \ketbra{lj11}{ln11}) \|_1 \notag \\
&\quad + \| \delta_{jk}\delta_{im} \ketbra{il00}{kn11} \|_1 + \| \delta_{jl}\delta_{im} \ketbra{il00}{ln11} \|_1
+ \| \delta_{jk}\delta_{jm} \ketbra{jl00}{kn11} \|_1 + \| \delta_{jl}\delta_{jm} \ketbra{jl00}{ln11} \|_1 \notag \\
&\quad + \| \delta_{il}\delta_{nk} \ketbra{kj11}{im00} \|_1 + \| \delta_{il}\delta_{nl} \ketbra{lj11}{im00} \|_1
+ \| \delta_{jl}\delta_{nk} \ketbra{kj11}{jm00} \|_1 + \| \delta_{jl}\delta_{nl} \ketbra{lj11}{jm00} \|_1 \\
&=  \| \delta_{jk}\delta_{nk} \ketbra{il00}{im00} \|_1 + \| \delta_{jl}\delta_{nl} \ketbra{il00}{im00} \|_1
+ \| \delta_{jk}\delta_{nk} \ketbra{jl00}{jm00} \|_1 + \| \delta_{jl}\delta_{nl}\ketbra{jl00}{jm00} \|_1 \notag \\
&\quad + \| \delta_{il}\delta_{im} \ketbra{kj11}{kn11} \|_1 + \| \delta_{jl}\delta_{jm} \ketbra{kj11}{kn11} \|_1
+ \| \delta_{il}\delta_{im} \ketbra{lj11}{ln11} \|_1 + \| \delta_{jl}\delta_{jm} \ketbra{lj11}{ln11} \|_1 \notag \\
&\quad + \| \delta_{jk}\delta_{im} \ketbra{il00}{kn11} \|_1 + \| \delta_{jl}\delta_{im} \ketbra{il00}{ln11} \|_1
+ \| \delta_{jk}\delta_{jm} \ketbra{jl00}{kn11} \|_1 + \| \delta_{jl}\delta_{jm} \ketbra{jl00}{ln11} \|_1 \notag \\
&\quad + \| \delta_{il}\delta_{nk} \ketbra{kj11}{im00} \|_1 + \| \delta_{il}\delta_{nl} \ketbra{lj11}{im00} \|_1
+ \| \delta_{jl}\delta_{nk} \ketbra{kj11}{jm00} \|_1 + \| \delta_{jl}\delta_{nl} \ketbra{lj11}{jm00} \|_1 \\
&= \| \bra{ijkl} {W_0} \ket{inkm} \|_1 + \| \bra{ijll} {W_0} \ket{inlm} \|_1 + \| \bra{jjkl} {W_0} \ket{jnkm} \|_1 + \| \bra{jjll} {W_0} \ket{jnlm} \|_1
\end{align}
where the second equality holds because the states of control qubits are different,
the third and the fourth equality holds because $i \neq j$ and $k \neq l \neq m$,
and the fifth equality holds because all the terms are non-negative.

Similarly, $J_s \in G_{4b} \otimes G_{4a}$ can be written as
$(\ketbra{ji}{ni} - \ketbra{jj}{nj}) \otimes (\ketbra{kl}{km} - \ketbra{ll}{lm} )$ for some $\{i,j,k,l,m\}$.
In this case,
\begin{align}
&\| \Tr_{in} [W J_s^t] \|_1 \notag \\
&= \| \bra{jikl} {W_0} \ket{nikm} - \bra{jill} {W_0} \ket{nilm} - \bra{jjkl} {W_0} \ket{njkm} + \bra{jjll} {W_0} \ket{njlm} \|_1 \\
&= \| (\delta_{ik}\delta_{ik} - \delta_{il}\delta_{il} -\delta_{jk}\delta_{jk}  + \delta_{jl}\delta_{jl}) \ketbra{jl00}{nm00}) \|_1 \notag\\
&\quad + \| \delta_{jl}\delta_{nm} \ketbra{ki11}{ki11} - \delta_{jl}\delta_{nm} \ketbra{li11}{li11}
-\delta_{jl}\delta_{nm} \ketbra{kj11}{kj11} + \delta_{jl}\delta_{nm} \ketbra{lj11}{lj11} \|_1 \notag\\
&\quad + \| \delta_{ik}\delta_{nm} \ketbra{jl00}{ki11} - \delta_{il}\delta_{nm} \ketbra{jl00}{li11}
-\delta_{jk}\delta_{nm} \ketbra{jl00}{kj11} + \delta_{jl}\delta_{nm} \ketbra{jl00}{lj11} \|_1 \notag\\
&\quad + \| \delta_{jl}\delta_{ik} \ketbra{ki11}{nm00} - \delta_{jl}\delta_{il} \ketbra{li11}{nm00}
-\delta_{jl}\delta_{jk} \ketbra{kj11}{nm00} + \delta_{jl}\delta_{jl} \ketbra{lj11}{nm00} \|_1 \\
&= \| (\delta_{ik}\delta_{ik} + \delta_{jl}\delta_{jl}) \ketbra{jl00}{nm00} \|_1 + \| (\delta_{il}\delta_{il} + \delta_{jk}\delta_{jk}) \ketbra{jl00}{nm00} \|_1\notag\\
&\quad + \| \delta_{jl}\delta_{nm} \ketbra{ki11}{ki11} \|_1 + \| \delta_{jl}\delta_{nm} \ketbra{li11}{li11} \|_1
+ \| \delta_{jl}\delta_{nm} \ketbra{kj11}{kj11} \|_1 + \| \delta_{jl}\delta_{nm} \ketbra{lj11}{lj11} \|_1 \notag\\
&\quad + \| \delta_{ik}\delta_{nm} \ketbra{jl00}{ki11} \|_1 + \| \delta_{il}\delta_{nm} \ketbra{jl00}{li11} \|_1
+ \| \delta_{jk}\delta_{nm} \ketbra{jl00}{kj11} \|_1 + \| \delta_{jl}\delta_{nm} \ketbra{jl00}{lj11} \|_1 \notag\\
&\quad + \| \delta_{jl}\delta_{ik} \ketbra{ki11}{nm00} \|_1 + \| \delta_{jl}\delta_{il} \ketbra{li11}{nm00} \|_1
+ \| \delta_{jl}\delta_{jk} \ketbra{kj11}{nm00} \|_1 + \| \delta_{jl}\delta_{jl} \ketbra{lj11}{nm00} \|_1 \\
&= \| \bra{jikl} {W_0} \ket{nikm} \|_1 + \| \bra{jill} {W_0} \ket{nilm} \|_1 + \| \bra{jjkl} {W_0} \ket{njkm} \|_1 + \| \bra{jjll} {W_0} \ket{njlm} \|_1
\end{align}
where the second equality holds because the states of control qubits are different.
The third holds because $i \neq j$ and $k \neq l \neq m$,
and the fourth equality holds because all the terms are non-negative.

\end{document}